\newtheorem{theorem}{{\bf Theorem}}
\newtheorem{corollary}{{\bf Corollary}}
\newtheorem{lemma}{{ \bf Lemma}}
\newtheorem{definition}{{\bf Definition}}
\newtheorem{proposition}{Proposition}
\def\Reals{\Bbb{R}}
\newtheorem{assumption}{Assumption}
\newcommand{\beq}{\begin{equation}}
\newcommand{\eeq}{\end{equation}}
\newcommand{\bdisp}{\begin{displaymath}}
\newcommand{\edisp}{\end{displaymath}}
\newcommand{\beqarr}{\begin{eqnarray}}
\newcommand{\eeqarr}{\end{eqnarray}}
\newcommand{\bmlt}{\begin{multline}}
\newcommand{\emlt}{\end{multline}}
\newcommand{\beqarrn}{\begin{eqnarray*}}
\newcommand{\eeqarrn}{\end{eqnarray*}}
\newcommand{\benum}{\begin{enumerate}}
\newcommand{\eenum}{\end{enumerate}}
\newcommand{\bit}{\begin{itemize}}
\newcommand{\eit}{\end{itemize}}
\newcommand{\bc}{\begin{center}}
\newcommand{\ec}{\end{center}}
\newcommand{\bdes}{\begin{description}}
\newcommand{\edes}{\end{description}}
\newcommand{\bfig}{\begin{figure}}
\newcommand{\efig}{\end{figure}}
\newcommand{\bemq}{\begin{quote} \begin{em}}
\newcommand{\eemq}{\end{em} \end{quote}}
\newcommand{\bmp}{\begin{minipage}}
\newcommand{\emp}{\end{minipage}}
\newcommand{\eqn}[1]{(\ref{#1})}
\newcommand{\floor}[1]{\left\lfloor{#1}\right\rfloor}
\newcommand{\ceil}[1]{\left\lceil {#1} \right\rceil}
\newcommand{\indic}[1]{{1}_{\left\{{#1}\right\}}}
\newcommand{\kth}{^{{\mathrm{th}}}}
\newcommand{\ie}{{i.e.}}
\newcommand{\eg}{{ e.g.}}
\newcommand{\iid}{{i.i.d.}}
\newcommand{\bsp}{\begin{slide*}}
\newcommand{\esp}{\end{slide*}}
\newcommand{\bsl}{\begin{slide}}
\newcommand{\esl}{\end{slide}}
\newcommand{\Scale}[2][4]{\scalebox{#1}{$#2$}}%
\newcommand{\e}{{e}}
\begin{document}

\author{\IEEEauthorblockN{Virag Shah}
\IEEEauthorblockA{Microsoft Research-Inria Joint Centre\\
 Palaiseau, France, 91120\\
virag.shah@inria.fr}
\and
\IEEEauthorblockN{Anne Bouillard}
\IEEEauthorblockA{ENS/INRIA\\
Paris Cedex 05, France, 75230 \\
anne.bouillard@ens.fr}
\and
\IEEEauthorblockN{Fran\c{c}ois Baccelli}
\IEEEauthorblockA{The University of Texas at Austin\\
 Austin, TX, USA, 78712 \\
francois.baccelli@austin.utexas.edu}
}


%

\title{
Delay Comparison of Delivery and Coding Policies
in Data Clusters}

\maketitle

\begin{abstract}
A key function of cloud infrastructure is to store and deliver diverse
files, \eg, scientific datasets, social network information, videos, etc. 
In such systems, for the purpose of fast and reliable delivery,
files are divided into chunks, replicated or erasure-coded,
and disseminated across servers. It is neither known in general
how delays scale with the size of a request 
nor how delays compare under different 
policies for coding, data dissemination, and delivery. 

Motivated by these questions, we develop and explore a set of evolution 
equations as a unified model which captures the above features.
These equations allow for both efficient simulation and mathematical analysis 
of several delivery policies under general statistical assumptions. 
In particular, we quantify in
what sense a workload aware delivery policy performs better
than a workload agnostic policy. Under a dynamic or stochastic setting, 
the sample path comparison of these policies does not hold
in general. The comparison is shown to hold under the weaker
increasing convex stochastic ordering,
still stronger than the comparison of averages.

This result further allows us to obtain insightful
computable performance bounds. For example,
we show that in a system where files are divided into chunks of equal size,
replicated or erasure-coded, and disseminated across servers at random,
the job delays increase sub-logarithmically in the request size
for small and medium-sized files but linearly for large files. 


\end{abstract}



\section{Introduction}\label{sec:Intoduction}

Modern cloud computing infrastructures feature several clusters each of which consists of thousands of highly interconnected servers which collectively run and serve diverse computing applications \cite{VPK15,SKR10,GHJ09}. An important aspect of these clusters is to collectively store and deliver Internet scale data/files. Key design challenges for such systems include placement of files across servers and an algorithm for the swift delivery of dynamically arriving file requests.  A common practice towards file placement is to divide each file into chunks of fixed size, which could then potentially be replicated/coded, and to disseminate them across the servers \cite{GGL03,FLP10}. This can potentially reduce delays in delivering large files since the delivery algorithm could now aggregate the service rate from multiple servers. 

To gain intuition, consider some hypothetical scenario where, for the placement of each file, one is allowed to use variable and arbitrarily small (possibly fractional) chunk sizes. Then, for a system of $m$ servers, one could divide each file of size $\nu$ bits into $m$ different chunks, each of size $\nu/m$ bits. Suppose that the service/delivery rate at each sever is $\mu$ bits/sec and that there is no other network bottleneck. Then, the minimum achievable delay in serving a download request for a file of size $\nu$ is $\frac{\nu}{m \mu}$, which is possible only if no other request is present in the system. 

However, delays which scale inverse linearly with $m$ clearly cannot be achieved for each file if there is a limit to the minimum chunk size. For example,  suppose that each file of size $\nu$ is divided into $\ceil{\frac{\nu}{c}}$ chunks of size $c$.  Then,  if there is only one download request in the system at a given time, for any file of size less than $cm$, a download delay equal to $\frac{c}{\mu}$ can be achieved. Whereas for files several times larger than $cm$ bits, the delay is still of the order of $\frac{\nu}{m \mu}$ under isolation. However, for a system with diverse files and fixed chunk size, it is not directly clear what the delays are under stochastic loads. In such a setting, how do delays relate with the size of a requested file? Do replication of chunks or erasure coding help in reducing delays? What is the impact of dynamic load-balancing? These are some of the questions we address in this work. 

\underline{\fontsize{11pt}{11pt}\selectfont Contributions:} We provide a stochastic model which encapsulates the key features of the content delivery process in a highly interconnected cluster of servers and allows us to compare several different policies as well as to obtain explicit performance bounds. In particular, our model captures the following aspects: 

{\em Dissemination policy:} We allow each file to be divided into chunks of a given size. Thus, a larger file is divided into larger number of chunks.  These chunks are most often encoded to obtain code blocks of the given size, as explained below. The code blocks are then disseminated across servers in a randomized fashion to ensure that the load across servers is balanced. 

{\em Coding policy:} Suppose that a file is divided into $k$ chunks. For each $k \ge 1$, these chunks are coded into $\alpha_k$ code blocks for some $\alpha_k\ge k$ via MDS (maximum distance separable) erasure codes \cite{DGW10,PlH13}. These codes are designed such that the original $k$ chunks can be exactly recovered from any $k$ out of the $\alpha_k$ code blocks. This allows additional flexibility towards dynamically balancing load across servers as the requests arrive over time, as explained below. 

{\em Delivery policy:} Upon the arrival of a request for a file with $k$ chunks, a request is sent to a subset of servers to obtain $k$ out $\alpha_k$ associated code blocks. The servers serve the block requests in FCFS fashion. We allow dynamic load balancing policies such as {\em Water-filling} and {\em Batch Sampling} policies (defined below) which favor a subset of the set of servers with lower instantaneous loads to balance the server workload as well as to achieve lower request delays.  

We propose a comprehensive model for
this class of systems, with the potential of representing
all such policies under certain diversity and symmetry assumptions
on the file sizes and the loading policy. This model consists of a set of
evolution equations which allow for both efficient
simulation and mathematical analysis under general statistical
assumptions. In particular, we are able to show the following: 

\begin{enumerate}
\item We compare the evolution of workloads under three different delivery policies: namely, water-filling ($\mathcal{WF}$), batch sampling ($\mathcal{BS}$),
and a randomized policy called {\em Balanced Random} ($\mathcal{BR}$).
We show that, for a given workload at each server, $\mathcal{WF}$ is optimal in the sense that upon a new arrival, it achieves `the most balanced' workload as compared to any other policy. Further, $\mathcal{BS}$ is somewhere in between
$\mathcal{WF}$ and $\mathcal{BR}$ in this respect. 
\item We show that $\mathcal{WF}$ and $\mathcal{BS}$ achieve more favorable workload distributions and lower delay distributions as compared to $\mathcal{BR}$ in the sense of `increasing convex order', which in turn implies that the former policies achieve better performance not only in expectation but in higher moments as well. 
\item We provide an upper bound for the delay in delivering a file as a function of its size, under a scenario where the requests form a mix of diverse file sizes. Our bound reveals the relative impact of the local dynamics at an individual server and that of the global view of server workloads seen by an arrival. We also provide new scaling laws on the behavior of delays under such a scenario. 
\item Using simulations we analyze the impact of the key options and 
parameters, including the delivery policy, the coding options and
the chunck size. 
We identify two fundamental regimes, the logaritmic regime when the file sizes are such that no two chunks are stored on the same server,  and the linear regime when files have a number of chunks that exceeds the number of servers.
We show that in the logarithmic regime, the gains of dynamic load balancing via $\mathcal{WF}$ and $\mathcal{BS}$ are significant even when the coding rate is small. We also show that our product form bound on the delivery latency is tight when requests have a moderate size. 
\end{enumerate}

\underline{\fontsize{11pt}{11pt}\selectfont Related Work:}
Recently there has been significant interest towards developing scalable
performance models and analysis for content delivery systems with low delays.
For example, the work in \cite{ShV15_TON,ShV16_Ind,ShV16_Het} 
exploits server parallelism via ``resource pooling'', that is
multiple servers are allowed to work together as a pooled
resource to meet individual download requests.
The pools of servers associated with different requests may overlap,
so the sharing of server resources across classes is
done via a fairness criterion.
Under a scenario where the size of resource pools is {\em limited} (\ie
$o(m)$), it is shown that the gains of resource pooling and
load-balancing can be achieved simultaneously.


An alternate approach considered in the literature is to split
a download request into multiple parts, for example, into requests
for individual chunks, and achieve server parallelism by employing
different servers for different parts
\cite{Vve98,LRS16}.
Further, sophisticated coding policies are employed to
achieve flexibility in server choices \cite{LRS16}.
Under the assumption that the number of servers available for each request
is {\em limited}, the works in \cite{Vve98,LRS16} are able to use mean-field
based arguments to study performance as the number of servers $m$ tends to
infinity. Several other works also study queuing models under coding based
techniques via heuristics or bounds, \eg \cite{SLR14,JLS14,JDP05,LiK14},  but these
are not scalable for our purposes. 


We depart from the above approaches in that we are interested in developing
performance models for a regime where we obtain a maximum gain from server
parallelism without restricting ourselves to limited resource pools or
a limited number of available servers for each request.
Given a lower bound on the chunk size, we divide each file into
a maximum number of chunks and disseminate them across several
servers, potentially $\Omega(m)$ servers for large files. Thus, we allow
$\Omega(m)$ servers to take care of a request in parallel.
We allow diverse file sizes and
provide a delay bound which is a function of the file size.

In terms of tools used, we model the system dynamics via an evolution equation which is a generalization of the Kiefer and Wolfowitz recursion for workloads in $G/G/s$ queues \cite{BaB03}, which allows us to go beyond exponentiality assumptions for file-size requests. We use coupling arguments to compare different policies. Coupling has been used to compare several queueing systems in past, \eg, see for example~\cite{ShV16_Het,BMS89}. Further, to provide explicit bounds on delays, we use the notion of association of random variables, which is a property that has had several applications in queueing systems and beyond \cite{BaB03,MuS02}.

\underline{\fontsize{11pt}{11pt}\selectfont Organization:} In Section \ref{sec:SystemModel} we provide our system model and develop the evolution equations. In Section \ref{sec:Comparisons} we provide results comparing various dynamic load balancing policies via coupling arguments. In Section \ref{sec:Bounds} we give performance bounds based on the notion of association of random variables. In Section \ref{sec:RandomChunkSize} we consider a scenario where the chunk size may be different for different files.  In Section \ref{sec:sim-rec} we provide simulation results and numerical evaluations.  We conclude in Section~\ref{sec:Conclusions}. Some proofs of technical nature are provided in the Appendix.


\section{System Model}\label{sec:SystemModel}

We consider a system with $m$ servers, indexed $1,2,\ldots,m$. The system consists of a very large number (several orders of magnitude larger than $m$) of diverse files.  We assume that the size of each file is an integer multiple of $c$ bits. Each file is divided into chunks of size $c$ bits each. These chunks are encoded before being placed across servers, as explained below.

For each positive integer $k$ we use an MDS erasure code of rate $k/\alpha_k$, where $\alpha_k$ is an integer greater than or equal to $k$. Such a code is called $(\alpha_k,k)$ MDS code in coding theory~\cite{DGW10}. Thus, equivalently, each file of size $kc$ is divided into $k$ chunks and encoded into $\alpha_k$ code blocks of size $c$ bits each. The MDS erasure codes may serve various practical purposes. Only the following property is relevant for our purposes: for a file of size $kc$, it is possible to recover the entire file by downloading any $k$ out of the $\alpha_k$ code blocks.

For each file of size $kc$, the associated $\alpha_k$ code blocks are placed across servers as follows. If $\alpha_k < m$, then we choose $\alpha_k$ among the $m$ servers uniformly at random and place a distinct code block across each of these servers. Else, we place $\floor{\frac{\alpha_k}{m}}$ distinct blocks on each server and for the remaining $\alpha_k - m\floor{\frac{\alpha_k}{m}}$ blocks we choose that many servers uniformly at random.

We assume that the blocks are placed across servers as described above at time $t=-1$.
The placement of blocks is kept fixed since then. From time $t=0$, file download
requests arrive as per an independent Poisson point process $ \Pi$ with 
rate $\lambda$.  Let $\{t_0, t_1, \ldots\}$ be the points of $ \Pi$.

Consider a probability mass function $\pi = (\pi_k: k\in \mathbb Z_+)$.
Each request arrival corresponds to a file of size $ck$ bits with probability $\pi_k$
independently of all other arrivals.
Let $\kappa_n$ be the number of chunks for the file requested at time $t_n$.
Thus, $\{\kappa_n\}_0^\infty$ is a sequence of discrete i.i.d. random variable with p.m.f.\ $\pi$. 

For each $n$, let $a_n \in \mathbb Z_+^m$ represent the placement of the file requested
at time $t_n$, in the following sense: for each server $i$ the entry $a_n^i$ represents the
number of coded blocks placed on server $i$ that correspond to the file requested upon the $n\kth$ arrival.
Thus, for each $n,k \in \mathbb Z_+$, if $\kappa_n = k$, then $|a_n| = \alpha_k$. 
We call $\{a_n\}_0^\infty$ the sequence of placement vectors. 

Let $\nu = c \sum_{k=0}^{\infty}{k\pi_k}$ denote the mean file-size in bits.
Let $\rho = \lambda \nu/m$ denote the per server load in bits/sec. 


\begin{assumption}[Symmetry in load across servers]
Due to the randomized placement of blocks, for a very large number of files,
the load across servers is approximately symmetric. Thus we model the symmetry in load via
symmetry in request arrivals as follows: given $\kappa_n = k$, $a_n$ is chosen uniformly 
at random from each of its feasible realizations.
Equivalently, given $\kappa_n = k$, the entry $a_n^{i}$ is equal to 
$\floor{\frac{\alpha_k}{m}} + 1$ for $\alpha_k - m\floor{\frac{\alpha_k}{m}}$
servers chosen uniformly at random and it is equal to $  \floor{\frac{\alpha_k}{m}}$ for the rest of the servers.

Making such a symmetry assumption to obtain insightful results
is a common practice, see \eg\ \cite{Vve98,ShV15_TON,LRS16}.
While, in general, a system with a finite number of 
files may not be symmetric, we believe that this is a good
approximation especially when the number of files is an
order of magnitude larger than the number of servers.

\end{assumption}

We will not discuss server memory capacity issues here as
this is not needed.
Note however that such a randomized placement results
into concentration of memory usage at each server.


%
%

\begin{figure*}
{\small
        \centering 
                \begin{subfigure}[b]{0.65\columnwidth}
		\centering
 		\def \svgwidth{0.8\columnwidth}
 		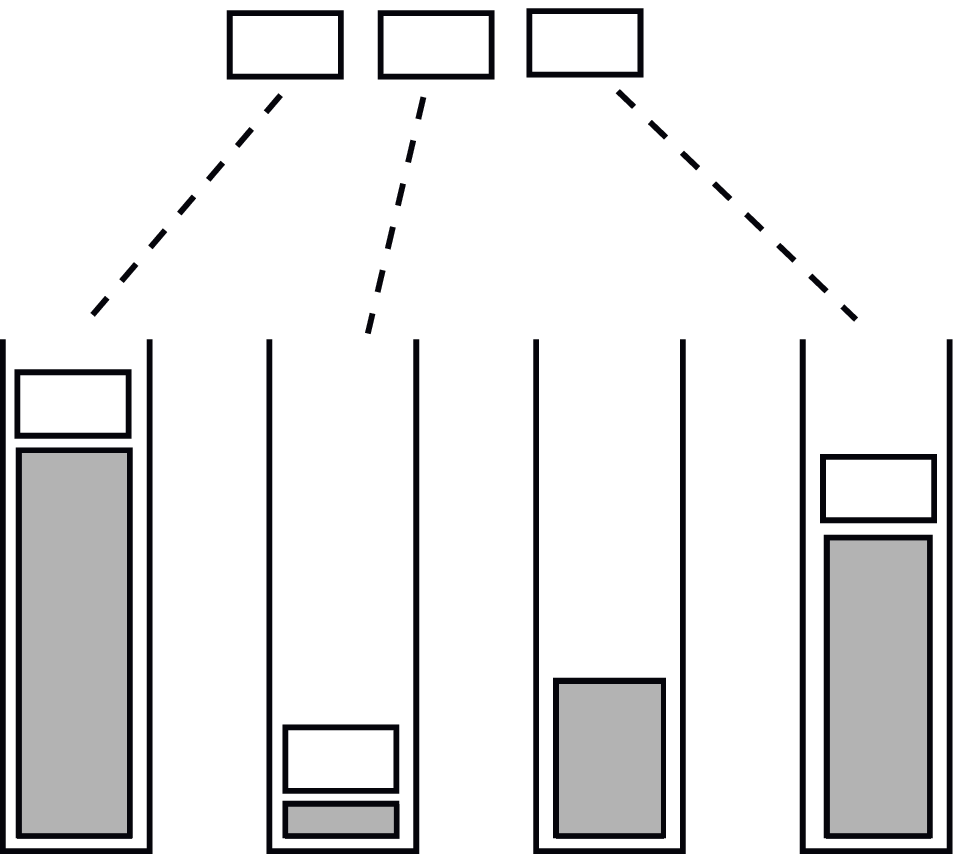          
	     \caption{Balanced Random}
        \end{subfigure}
        \begin{subfigure}[b]{0.65\columnwidth} 
		\centering
 		\def \svgwidth{0.8\columnwidth}
 		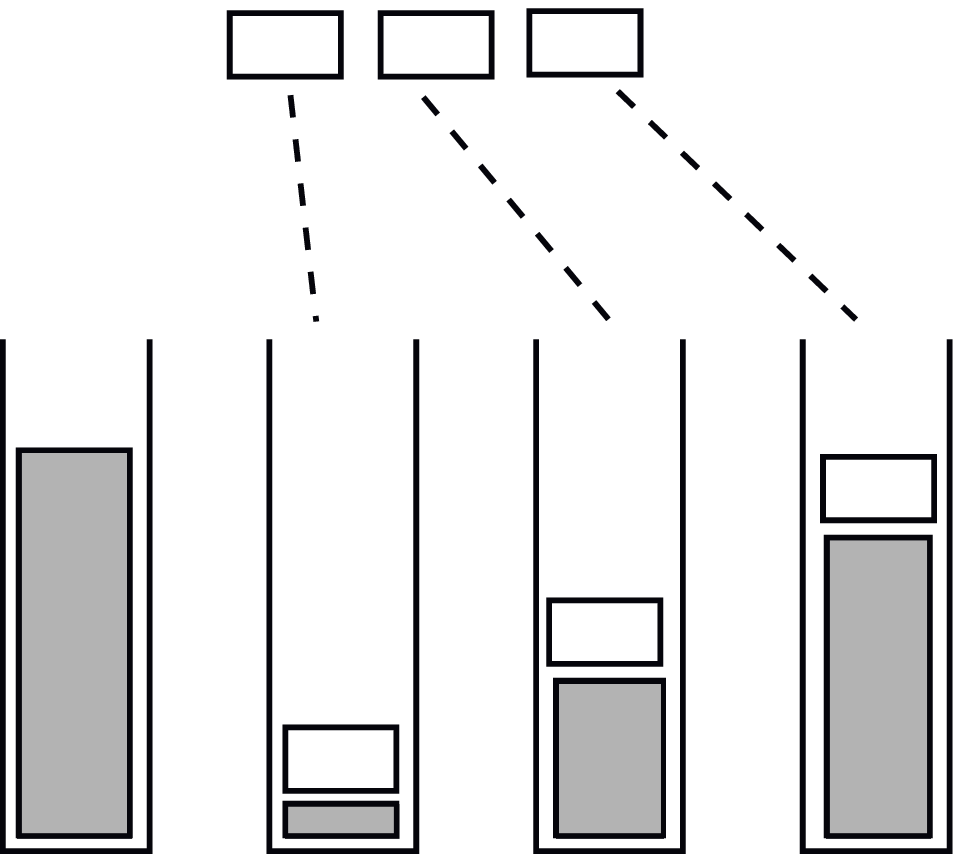          
	     \caption{Batch Sampling}  \label{fig:BS}
        \end{subfigure}
                \begin{subfigure}[b]{0.65\columnwidth}
		\centering
 		\def \svgwidth{0.8\columnwidth}
 		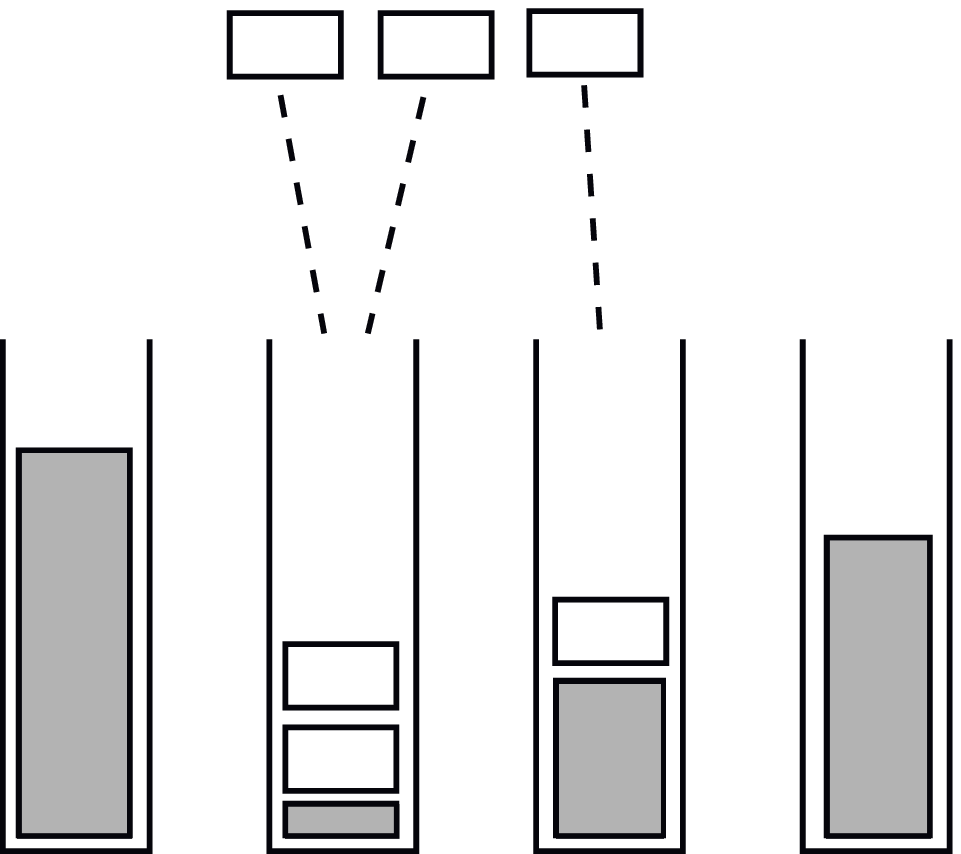           
	     \caption{Water-filling} \label{fig:WF}
        \end{subfigure}
   \caption{Illustration of different dynamic delivery policies upon the $n\kth$ arrival; $m=4$, $k=3$, $\alpha_k=5$, $a_n = (1,2,1,1)$. }  \label{fig:DynamicLoadBalancing}
 }
\hspace{-.5cm}
\end{figure*}

{\em  Delivery policy:} Upon each arrival, we load servers with requests for coded blocks via a delivery/routing policy as described below. Each server serves its block requests in FCFS fashion at rate $\mu$, \ie, it delivers a code block at the rate of $\mu$ bits per second.  Recall, due to our use of MDS codes, if $\kappa_n = k$ then the system only needs to deliver $k$ out of the $\alpha_k$ associated blocks for the $n\kth$ arrival. We let $s_n$ denote the $\mathbb Z_+^m$ valued random variable where $s_n^i$ is the number of blocks requested from server $i$ upon the $n\kth$ arrival. Thus, we have $|s_n| = \kappa_n$ and $s_n \le a_n$ for each $n$. We call $\{s_n\}$ the sequence of {\em routing vectors}. Following are some of the admissible routing policies, each resulting into possibly different sequences of routing vectors.

{\em Balanced Random Policy ($\mathcal{BR}$):} For each $n,k$, if $\kappa_n = k$, then request $ \floor{\frac{k}{m}}$ blocks from each server and, for the remaining $k - m\floor{\frac{k}{m}}$ blocks, choose the same number of servers at random from the remaining $\min\left(\alpha_k - m  \floor{\frac{k}{m}}, m\right)$ servers having an additional block. More formally, suppose $\kappa_n = k$. Let
 $k' = k - m\floor{\frac{k}{m}}$ and $a'_n = a_n - \floor{\frac{k}{m}}{\bf 1}$. From the set $\{i: {a'}^i_n > 0\}$ choose a subset of size $k'$ at random. Let $s^i_n$ be equal to $ \floor{\frac{k}{m}} + 1$ for each $i$ in this subset and $\floor{\frac{k}{m}}$ for others.

The following two policies take a routing decision upon the $n\kth$ arrival based on the instantaneous workloads at different servers at time $t_n^{-}$. 

{\em Batch Sampling Policy ($\mathcal{BS}$):} This is a workload dependent
policy. The workload at a server at any given time is the number of
bits requested from the server and which are not yet served. 
Of the required $k$ blocks, request $ \floor{\frac{k}{m}}$ blocks
from each server and for the remaining $k' = k - m\floor{\frac{k}{m}}$ blocks, choose the $k'$ servers with least instantaneous workload from the remaining $\min\left(\alpha_k - m  \floor{\frac{k}{m}}, m\right)$ servers having an additional block. More formally, suppose that the workload at the servers at time $t_n^-$ is $w = (w^i: i=1,\ldots,m)$ and that $\kappa_n=k$. Let $k' = k - m\floor{\frac{k}{m}}$ and $a'_n = a_n - \floor{\frac{k}{m}}{\bf 1}$. 
Let $i_1,i_2,\ldots,i_{k'}$  be given recursively as follows: let $i_1 =  \mathop{\arg\min}_{   i:  {a'}^i_n > 0 } w^i$, and for $l =2,\ldots,k'$ let $i_l = \mathop{\arg\min}_{   i:  {a'}^i_n > 0,  i \neq i_1,\ldots,i_{l-1}  } w^i. $ 
Then, we have  $s_n^i = \floor{\frac{k}{m}}+1$ for each $i \in \{i_1,i_2,\ldots,i_{k'}\}$ and $s_n^i = \floor{\frac{k}{m}}$ for 
$i \notin \{i_1,i_2,\ldots,i_{k'}\}$.

{\em Water-filling Policy ($\mathcal{WF}$):} This is also a workload
dependent policy. If $\kappa_n=k$, then at time $t_n$, we take a
routing decision for $k$ block requests defined sequentially as follows. 
Among the servers which store at least one of the $\alpha_k$ blocks for the associated file, choose the server with minimum workload. If there are multiple such servers, choose one at random. Request a block from this server and update its workload, i.e., add $c$ to its existing value. We now have to choose $k-1$ blocks among the $\alpha_k-1$ remaining code blocks, for which we repeat the above procedure, see Fig.~\ref{fig:DynamicLoadBalancing}. 

More formally, suppose that the workload at the servers at time $t_n^-$ is $w$ and that $\kappa_n=k$. 
 Then, let $j_1,j_2,\ldots,j_k$ be recursively given as follows: $j_1 =  \mathop{\arg\min}_{   i:  {a}^i_n > 0 } w^i$, and for $l =2,\ldots,k$ let
 $$ j_l = \mathop{\arg\min}_{i: a_n^i - \sum_{l' = 1}^{l-1} \indic{i=j_{l'}} >0}  w^i + c \sum_{l' = 1}^{l-1} \indic{i=j_{l'}}.$$
For $i=1,\ldots,m$, let $e_i$ represent the vector in $\mathbb R^m$ with $i\kth$ entry equal to $1$ and other entries equal to $0$. Then, under the $\mathcal{WS}$ policy we have $ s_n = \sum_{l=1}^m \e_{j_l}$.

One would guess that $\mathcal{WF}$ is the most egalitarian policy, \ie, it attempts at spreading the arriving load to servers with lower instantaneous workloads, and $\mathcal{BS}$ is somewhere in between $\mathcal{WF}$ and $\mathcal{BR}$ in egalitarianism.  We will corroborate these intuitions in the next section.

Note that we do not allow policies which depend explicitly on the server indices. More concretely, if server indices are permuted at time $t=0^-$, the choice of servers upon each arrival is permuted in the corresponding fashion.

 Recall that the routing vector $s_n$ for each $n$ is such that $|s_n|$ is chosen independently with distribution $(\pi_k: k\in \mathbb N)$, while its entries depend on the workload at the servers at time $t_n^-$ and on the delivery policy. 
Due to symmetry in file placement (modeled via symmetry in request arrivals) and the above mentioned restriction on the delivery policies, we have that $\{s_n\}_0^{\infty}$ are exchangeable random vectors in the following sense: upon permutation of server indices the distribution of the sequence $\{s_n\}$ remains unchanged.

 Let $\{\tau_n\}_0^\infty$ be inter-arrival times, \ie, $\tau_n = t_{n+1} - t_n$ for each $n$.   
Let $\{W_n\}_0^\infty$ be a sequence of $\mathbb{R}_+^m$ valued random variables representing the workload seen by $n\kth$ arrival, \ie, the workload at different servers at time $t = t_n^-$.  Then we have $W_0 = {\bf 0}$ and 
\begin{equation}\label{eq:Main_Recursion}
W_{n+1} = (W_n + c s_n - \mu \tau_{n}  {\bf 1})^+  ,  \;\;\; n=0,1,\ldots
\end{equation}
where ${\bf 1}= (1,1,\ldots,1)$, and $$(x^1,\ldots, x^m)^+ = (\max(x^1,0), \ldots, \max(x^m,0)).$$

The {\em delay} of the $n\kth$ request is then:
$$D_n = \mathop{\max}_{i: s_n^i > 0}  W_n^i+ c s_n^i, \;\;\; n=0,1,\ldots$$

As mentioned earlier, we are mainly interested in the case where $c$
is a constant since we want to obtain a maximum gain from server
parallelism. However, one can envisage a scenario where different
requests/files use different chunk sizes.
This can be incorporated in our model as follows:
we have $W_0 = {\bf 0}$ and 
\begin{equation}\label{eq:Main_Recursion2}
W_{n+1} = (W_n + c_n s_n - \mu \tau_{n}  {\bf 1})^+  ,  \;\;\; n=0,1,\ldots,
\end{equation}
where the random variables $\{c_n\}_0^\infty$ are $\mathbb R_+$ valued and
\iid. Note that in this extension, for a given file, the chunks are
still of equal sizes. For most parts of the paper, we will use
recursion \eqn{eq:Main_Recursion}. We will nevertheless
discuss and analyze
\eqn{eq:Main_Recursion2} in Section~\ref{sec:RandomChunkSize}.

\section{Comparison of Delivery Policies}\label{sec:Comparisons}

In this section we compare the server workloads and the
request delays under different delivery policies.
We use coupling arguments to compare systems adopting
different delivery policies.
In particular, we couple the request arrival process
as well as the sequence of routing vectors in each system.
We then study and compare the evolution of server
workloads $\{W_n\}_0^\infty$ in the respective systems. 

For comparing the workloads of different systems,
we use stochastic submajorization and stochastic 
dominance in the increasing convex order sense,
which are briefly introduced in the first subsection.
While the former is more amenable to compare the loading
under different policies subject to a given initial condition,
the later allows us to propagate the comparison result and also
to compare delays (recall that the delay of a request is the $\max$
of the delays in downloading individual blocks).

\subsection{Order statistics and stochastic orders}
The notation and concepts listed below are
borrowed from \cite{MOA11} and \cite{MuS02}. 

For all vectors $z \in \mathbb R^{m}$, let $z^{(1)}, z^{(2)},\ldots,z^{(m)}$
represent its entries in increasing order. 

We say that a function $\phi:\mathbb R^m \to \mathbb R$ is symmetric if for all
$x \in \mathbb R^m$ and its permutation $x' \in \mathbb R^m$, we have $\phi(x) = \phi(x')$. 

For two vectors  $x,y \in \mathbb R^{m}$, we say that $x$ is majorized by $y$,
which is denoted by $x\prec y$, if $\sum_{i=1}^{m} x^i = \sum_{i=1}^{m} y^i $
and $\sum_{i=1}^{l} x^{(i)} \ge  \sum_{i=1}^{l} y^{(i)} $ for $l=1,2,\ldots,m-1$.
Intuitively, if $x \prec y$, then $x$ is `more balanced' than $y$.
For example, in $\mathbb R^m$, we have
$(1,1,\ldots,1) \prec (\frac{m}{2},\frac{m}{2},0,\ldots,0) \prec (m,0,0,\ldots,0)$. 

We say that $x$ is submajorized by $y$, which is denoted by $x\prec_s y$,
if $\sum_{i=l}^{m} x^{(i)} \le  \sum_{i=l}^{m} y^{(i)} $ for $l=0,1,2,\ldots,m-1$. 

We say that a function $\phi: \mathbb R^{m} \to \mathbb R$ is Schur-convex
if, for all $x$ and $y$ such that $x\prec y$, we have $\phi(x) \le \phi(y)$.
One can check that a function $\phi$ is Schur-convex and increasing if and only if (iff),
for all $x$ and $y$ such that $x\prec_s y$, we have $\phi(x) \le \phi(y)$. Further,
Schur-convex functions are symmetric since the property $x\prec_s y$
depends only on the ordered entries of $x$ and $y$.  

Consider two random vectors $X$ and $Y$. We say that $X$ is stochastically dominated by $Y$,
which is denoted by $X \le^{st} Y$, if, for all increasing functions $g$, we have $E[g(X)] \le E[g(Y)]$. 
A classical result (Strassen's theorem) states that $X \le^{st} Y$ iff there exist
random vectors $\tilde X$ and $\tilde Y$ such that $X$ and $\tilde X$ are identically distributed,
$Y$ and $\tilde Y$ are identically distributed, and $\tilde X \le \tilde Y$ w.p.\ $1$.

For two random vectors $X$ and $Y$ we say that $X$ is stochastically submajorized by $Y$, 
which is denoted by $X \prec^{st} Y$ if, for all Schur-convex functions $\phi$, we have
$E[\phi(X)] \le E[\phi(Y)]$. We have $X \prec^{st} Y$ iff there exist random vectors
$\tilde X$ and $\tilde Y$ such that $X$ and $\tilde X$ are identically distributed,
$Y$ and $\tilde Y$ are identically distributed, and $\tilde X \prec^{st} \tilde Y$ w.p.\ $1$.

Similarly, for the random vectors $X$ and $Y$ we say that $X$ is stochastically submajorized by $Y$, 
which is denoted by $X \prec_s^{st} Y$, if, for all increasing Schur-convex functions
$\phi$, we have $E[\phi(X)] \le E[\phi(Y)]$. 
Again, $X \prec_{s}^{st} Y$ iff there exist random vectors
$\tilde X$ and $\tilde Y$ such that $X$ and $\tilde X$ are identically distributed,
$Y$ and $\tilde Y$ are identically distributed, and $\tilde X \prec_s^{st} \tilde Y$ w.p.\ $1$.

For the random vectors $X$ and $Y$, we say that $X$ is stochastically dominated
by $Y$ in the increasing convex order sense, which is denoted by $X \le^{icx} Y$, if,
for all increasing convex functions $g$, we have $E[g(X)] \le E[g(Y)]$. 

For $i=1,\ldots,m$, let $e_i$ denote the vector in $\mathbb R^m$ with $i\kth$ entry
equal to $1$ and other entries equal to $0$. For any vector $x$ we let $|x|$ represent
the sum of the absolute values of its entries.

The following lemma is proved in the Appendix.

\begin{lemma}\label{lemma:MajImpliesIcx}
Consider $\mathbb R^m$ valued exchangeable random variable $X$ and $Y$. If $X \prec_s^{st} Y$ then we have $X \le^{icx} Y$.
\end{lemma}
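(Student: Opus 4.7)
The plan is to show the implication by a symmetrization argument that converts an arbitrary increasing convex test function into an increasing Schur-convex one, exploiting the exchangeability of $X$ and $Y$.

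First, I would fix an arbitrary increasing convex function $g: \mathbb R^m \to \mathbb R$ and define its symmetrization
\[
\bar g(x) = \frac{1}{m!} \sum_{\sigma \in S_m} g(\sigma x),
\]
where $\sigma$ ranges over all permutations of the coordinates. By construction $\bar g$ is symmetric. It is also increasing (each $g \circ \sigma$ is increasing because $\sigma$ preserves the coordinatewise order) and convex (each $g \circ \sigma$ is convex since $\sigma$ acts linearly on $\mathbb R^m$, and a nonnegative average of convex functions is convex).

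The next step invokes a classical result from majorization theory (see Marshall--Olkin--Arnold~\cite{MOA11}): every symmetric convex function on $\mathbb R^m$ is Schur-convex. Hence $\bar g$ is both increasing and Schur-convex. Applying the hypothesis $X \prec_s^{st} Y$ to $\bar g$ gives
\[
E[\bar g(X)] \le E[\bar g(Y)].
\]

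The final step uses exchangeability. Since $X$ has the same distribution as $\sigma X$ for every $\sigma \in S_m$, we have $E[g(X)] = E[g(\sigma X)]$ for each $\sigma$, and averaging over $\sigma$ yields $E[g(X)] = E[\bar g(X)]$; the same identity holds for $Y$. Chaining the three equalities/inequalities gives $E[g(X)] \le E[g(Y)]$, which is precisely $X \le^{icx} Y$. The only nontrivial ingredient is the symmetric-convex-implies-Schur-convex fact, which is where the real work sits; everything else is bookkeeping around symmetrization and the definition of exchangeability.
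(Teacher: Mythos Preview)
Your proof is correct and is essentially identical to the paper's own argument: both symmetrize the increasing convex test function by averaging over all permutations, invoke the Marshall--Olkin--Arnold fact that symmetric convex functions are Schur-convex, apply the $\prec_s^{st}$ hypothesis to the symmetrized function, and then use exchangeability of $X$ and $Y$ to pass back to $g$.
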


\subsection{Comparison of Policies}

A delivery policy can be seen as a form of load balancing.
Intuitively, a more egalitarian load balancing should 
achieve more balanced overall workloads. For instance, recall
the policies $\mathcal{WF}$, $\mathcal{BS}$, and $\mathcal{BR}$
defined in the System Model.
The following theorem says that, given a workload vector $W$, $\mathcal{WF}$
is the most egalitarian policy while $\mathcal{BS}$ is somewhere in between
$\mathcal{WF}$ and $\mathcal{BR}$. For a proof, see the Appendix.

\begin{theorem}\label{thm:Egalitarianism}
 Suppose an arrival into the system sees the workload $W$, where $W$ is an $\mathbb R^m$
valued random variable. Let $s^{\mathcal{WF}}$, $s^{\mathcal{BS}}$, $s^{\mathcal{BR}}$, and $s'$
be the routing vectors associated with $\mathcal{WF}$, $\mathcal{BS}$, $\mathcal{BR}$, and
an arbitrary routing policy, respectively.  Then, the following holds.
$$W+ cs^{\mathcal{WF}} \prec^{st} W+cs' \;\;  \text{ and } 
\;\;  W+ cs^{\mathcal{BS}} \prec^{st} W + cs^{\mathcal{BR}}.$$
Further, if $W$ is an exchangeable random vector, then we have
$$W + cs^{\mathcal{WF}} \le^{icx} W + cs^{\mathcal{BS}} \le^{icx} W + cs^{\mathcal{BR}}.$$
\end{theorem}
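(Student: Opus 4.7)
The plan is to establish the two $\prec^{st}$ statements via a deterministic pointwise argument (conditional on $W$ and on the placement vector $a_n$), and then upgrade the result to $\le^{icx}$ by combining submajorization with exchangeability through Lemma~\ref{lemma:MajImpliesIcx}. Note that $|s^{\mathcal{WF}}|=|s'|=|s^{\mathcal{BS}}|=|s^{\mathcal{BR}}|=k$, so in each comparison the two candidate post-arrival workloads have identical coordinate sum $|W|+ck$, and it suffices to verify the partial-sum condition of majorization (which is in fact stronger than the one required for $\prec_s$).

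For the first inequality, I would fix $W=w$ and $a=a_n$ and prove the stronger deterministic claim $w + cs^{\mathcal{WF}} \prec w + cs'$ for every feasible $s'$ with $|s'|=k$ and $s'\le a$. The argument is the classical pairwise Robin Hood transfer: assuming $s'\neq s^{\mathcal{WF}}$, one can select indices $i,j$ with $s'^i>s^{\mathcal{WF}\,i}$ and $s'^j<s^{\mathcal{WF}\,j}$ for which the post-routing workloads satisfy $w^i + cs'^i \ge w^j + cs'^j + c$---otherwise water-filling would have selected $i$ over $j$ at the step where it picked $j$. Transferring one block from $i$ to $j$ in $s'$ is feasible because $s'^j+1\le s^{\mathcal{WF}\,j}\le a^j$, and the Robin Hood transfer of magnitude $c$ yields a strictly more balanced vector in majorization order. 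Iterating transports $s'$ into $s^{\mathcal{WF}}$ while monotonically decreasing the post-arrival workload. Taking expectations gives $W+cs^{\mathcal{WF}}\prec^{st}W+cs'$. The second inequality is handled analogously, restricted to the $k'=k-m\lfloor k/m\rfloor$ residual blocks distributed in the set $\mathcal{I}=\{i:a_n^i>\lfloor k/m\rfloor\}$. Here one couples the uniform random $k'$-subset $S^{\mathcal{BR}}$ picked by $\mathcal{BR}$ to the deterministic set $S^{\mathcal{BS}}$ of $k'$ least-loaded indices in $\mathcal{I}$ by iteratively swapping some $i\in S^{\mathcal{BR}}\setminus S^{\mathcal{BS}}$ with some $j\in S^{\mathcal{BS}}\setminus S^{\mathcal{BR}}$. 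Since $\mathcal{BS}$ chooses the $k'$ smallest-workload indices in $\mathcal{I}$, we have $w^i\ge w^j$, and the same Robin Hood transfer of exactly one block applies. Averaging over the coupling gives the claim.

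Both $\le^{icx}$ inequalities then follow from the observation that $\prec^{st}$ implies $\prec_s^{st}$ (the class of Schur-convex functions contains the class of increasing Schur-convex functions) together with Lemma~\ref{lemma:MajImpliesIcx}, provided the post-arrival workloads are exchangeable. Exchangeability is inherited from $W$ via the symmetry in file placement (Assumption~1, which makes $a_n$ exchangeable) and the permutation-equivariance of $\mathcal{WF}$, $\mathcal{BS}$, and $\mathcal{BR}$, none of which depends on server labels; consequently, each of $W+cs^{\mathcal{WF}}$, $W+cs^{\mathcal{BS}}$, $W+cs^{\mathcal{BR}}$ is exchangeable. The main technical obstacle I anticipate is the transfer argument for the first inequality: one must carefully justify both that a strict workload gap of at least $c$ really holds for some admissible pair $(i,j)$---an assertion that relies crucially on the greedy, step-by-step construction of water-filling---and that feasibility $s\le a$ is preserved throughout the iteration. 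For the $\mathcal{BS}$ vs $\mathcal{BR}$ step, the analogous delicacy is to construct the coupling so that $S^{\mathcal{BR}}$ keeps its uniform distribution after the swaps are performed (which is immediate if one defines the coupling via a random relabeling of $\mathcal{I}$ before executing the swap sequence).
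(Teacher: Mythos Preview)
Your proposal is correct and follows essentially the same route as the paper: condition on $W=w$, $\kappa=k$, and $a$; prove the deterministic majorization $w+cs^{\mathcal{WF}}\prec w+cs'$ (resp.\ $w+cs^{\mathcal{BS}}\prec w+cs'$ for every balanced $s'$) by iterated Robin Hood transfers whose required gap of $c$ is justified via the greedy step of water-filling; then pass to $\le^{icx}$ through exchangeability and Lemma~\ref{lemma:MajImpliesIcx}. Your one unnecessary worry is the final coupling for $\mathcal{BR}$: once you have shown that $\mathcal{BS}$ majorizes \emph{every} balanced routing vector pointwise, the stochastic majorization against the random balanced vector $s^{\mathcal{BR}}$ follows immediately by conditioning, with no need to preserve the uniform law of $S^{\mathcal{BR}}$ through the swap sequence.
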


Thus, for a given workload at $n$, a system under $\mathcal{WF}$ or $\mathcal{BS}$
achieves a more balanced workload in the $\prec_s$ sense at $n+1$ 
as compared to $\mathcal{BR}$. However, the resulting workloads might be different.
Starting with $W_0 = {\bf 0}$, to be able to claim that an ordering holds for each $n$,
one needs to argue that it propagates. For this we additionally need the 
monotonicity property of $\mathcal{BR}$ given in the lemma below. For a proof, see the Appendix. 

\begin{lemma}\label{lemma:MonotonicityOfBR}
Consider random vectors $W$ and $W'$ such that $W \le^{icx} W'$.
Let $s$ and $s'$ be the routing vectors as per the $\mathcal{BR}$ policy for
$W$ and $W'$ respectively. Then, $W + cs \le^{icx} W' + cs'.$
\end{lemma}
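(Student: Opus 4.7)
The plan is to exploit the fact that $\mathcal{BR}$ is workload-agnostic: unlike $\mathcal{WF}$ or $\mathcal{BS}$, its routing vector depends only on the file size $\kappa_n$, the placement vector $a_n$, and some independent randomization for breaking ties, but not on the current workload $W_n$. This decoupling is the structural heart of the lemma.

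First I would couple the two arrivals so that $\kappa$ and $a$ are common to both systems, and so that the same auxiliary uniform randomization is used to select the $k - m\lfloor k/m \rfloor$ servers receiving an extra block. Under this coupling, $s$ and $s'$ coincide as a single random vector, call it $S$, with $S$ independent of both $W$ and $W'$. (This independence is already implicit in the model: $\kappa_n$ and $a_n$ are drawn independently of the workload history, and $\mathcal{BR}$ uses no workload information.)

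Next, given any increasing convex test function $g : \mathbb{R}^m \to \mathbb{R}$, I would introduce for each fixed $u \in \mathbb{Z}_+^m$ the shifted function $g_u(w) = g(w + cu)$. Translation by a constant preserves both coordinate-wise monotonicity and convexity, so $g_u$ is again increasing and convex. Applying the hypothesis $W \le^{icx} W'$ to $g_u$ gives $E[g_u(W)] \le E[g_u(W')]$ for every such $u$. Conditioning on $S$ and using its independence from $W, W'$ then yields
\begin{equation*}
E[g(W + cS)] = E\bigl[E[g_S(W) \mid S]\bigr] \le E\bigl[E[g_S(W') \mid S]\bigr] = E[g(W' + cS)].
\end{equation*}
Since $S = s = s'$ under the coupling, this is exactly $E[g(W + cs)] \le E[g(W' + cs')]$, establishing $W + cs \le^{icx} W' + cs'$.

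I do not anticipate a serious obstacle here; the main thing to be careful about is articulating precisely that the $\mathcal{BR}$ routing vector can be realized as a measurable function of quantities independent of the workload, so that the conditioning-on-$S$ step is rigorous. The proof would break down for $\mathcal{WF}$ or $\mathcal{BS}$ exactly because there $s$ is a function of $W$, and the analogous inner expectation would compare $g$ evaluated at different shifts on the two sides, so the contrast with the dynamic policies is worth flagging.
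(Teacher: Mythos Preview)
Your argument is correct. Both you and the paper exploit the same structural fact---that under $\mathcal{BR}$ the routing vector can be taken identically distributed and independent of the workload, so one may couple $s=s'=S$ independent of $(W,W')$---but the deductions diverge from there. The paper invokes Strassen's characterization of $\le^{icx}$ to produce a coupling with $E[W'\mid W]\ge W$, extends this to $E[W'+cs'\mid W,s]\ge W+cs$, and then applies Jensen's inequality to conclude. You instead stay at the level of the test-function definition: for each fixed realization $u$ of $S$, the shifted map $g_u(w)=g(w+cu)$ remains increasing convex, so the hypothesis applies directly, and independence lets you integrate out $S$. Your route is more elementary in that it avoids Strassen's theorem and Jensen altogether; the paper's route has the minor advantage of making the martingale-type coupling explicit, which can be reused elsewhere. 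Either way the essential content is the same: $\le^{icx}$ is preserved under addition of an independent summand.
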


%
%
%

The following theorem establishes that the $\mathcal{WF}$ and $\mathcal{BS}$ policies achieve 
`more balanced and lower' workloads across servers as compared to $\mathcal{BR}$ in a strong sense.
For a proof, see the Appendix.

\begin{theorem}\label{thm:Workload_comparision}
Consider a system which starts empty.  The workload under policies $\mathcal{WF}$, $\mathcal{BS}$, and $\mathcal{BR}$ satisfy the following:
$$W_n^{ \Scale[0.6]{\mathcal{WF}} } \le^{icx}  W^{ \Scale[0.6]{\mathcal{BR}} }_n \text{ and } W_n^{ \Scale[0.6]{\mathcal{BS}} } \le^{icx}  W^{ \Scale[0.6]{\mathcal{BR}} }_n \text{ for } n=0,1,\ldots $$
\end{theorem}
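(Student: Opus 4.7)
The plan is to prove both inequalities simultaneously by induction on $n$, using Theorem~\ref{thm:Egalitarianism} for the one-step comparison under a common workload, and Lemma~\ref{lemma:MonotonicityOfBR} to propagate the order from step $n$ to step $n+1$ through the $\mathcal{BR}$ dynamics. The inductive invariant I would carry is the pair: (i) $W_n^{\mathcal{WF}} \le^{icx} W_n^{\mathcal{BR}}$ (and similarly with $\mathcal{BS}$), and (ii) each $W_n^{\pi}$ ($\pi \in \{\mathcal{WF}, \mathcal{BS}, \mathcal{BR}\}$) is an exchangeable random vector in $\mathbb R^m$. The base case $n=0$ is immediate since $W_0 = \mathbf 0$ under all three policies.

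For the inductive step, assume the invariant at stage $n$. Couple the arrival process and let $\tau_n$ and the file-size $\kappa_n$ be common to all three systems. First apply Theorem~\ref{thm:Egalitarianism} to the exchangeable workload $W_n^{\mathcal{WF}}$: if $\tilde s_n^{\mathcal{BR}}$ denotes the $\mathcal{BR}$ routing vector one would have generated from $W_n^{\mathcal{WF}}$, then
\[
W_n^{\mathcal{WF}} + c\, s_n^{\mathcal{WF}} \;\le^{icx}\; W_n^{\mathcal{WF}} + c\, \tilde s_n^{\mathcal{BR}}.
\]
Next invoke Lemma~\ref{lemma:MonotonicityOfBR} with the pair $W_n^{\mathcal{WF}} \le^{icx} W_n^{\mathcal{BR}}$ supplied by the induction hypothesis:
\[
W_n^{\mathcal{WF}} + c\, \tilde s_n^{\mathcal{BR}} \;\le^{icx}\; W_n^{\mathcal{BR}} + c\, s_n^{\mathcal{BR}}.
\]
Transitivity of $\le^{icx}$ yields $W_n^{\mathcal{WF}} + c\,s_n^{\mathcal{WF}} \le^{icx} W_n^{\mathcal{BR}} + c\,s_n^{\mathcal{BR}}$. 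The same two-step argument, but starting from Theorem~\ref{thm:Egalitarianism}'s $\mathcal{BS}$ inequality applied to the exchangeable $W_n^{\mathcal{BS}}$, gives the analogous bound for $\mathcal{BS}$.

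It then remains to show that the map $x \mapsto (x - \mu \tau_n \mathbf 1)^+$ preserves $\le^{icx}$. For each fixed realization $\tau_n = t$, the coordinatewise map $x \mapsto (x - \mu t \mathbf 1)^+$ is nondecreasing and convex in each coordinate. Hence, for any coordinatewise nondecreasing convex $f$, the composition $x \mapsto f((x - \mu t \mathbf 1)^+)$ is again nondecreasing and convex. Applying the $\le^{icx}$ inequality conditionally on $\tau_n$ and integrating out over the independent $\tau_n$ gives $W_{n+1}^{\mathcal{WF}} \le^{icx} W_{n+1}^{\mathcal{BR}}$, and similarly for $\mathcal{BS}$. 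Exchangeability of $W_{n+1}^{\pi}$ is inherited because (a) the sequence $\{s_n\}$ is exchangeable by the model's symmetry assumption, (b) none of $\mathcal{WF}, \mathcal{BS}, \mathcal{BR}$ uses server indices, and (c) $\mu \tau_n \mathbf 1$ is exchangeable; together these make the recursion \eqn{eq:Main_Recursion} map exchangeable vectors to exchangeable vectors.

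The step I expect to require the most care is the preservation of $\le^{icx}$ through the affine shift and positive-part operation in \eqn{eq:Main_Recursion}; although conceptually straightforward, it is the only step that ties together the random thinning by $\tau_n$ with the coordinatewise structure of the icx order on $\mathbb R^m$. A secondary technical nuisance is maintaining exchangeability through the induction, since Theorem~\ref{thm:Egalitarianism}'s second inequality explicitly requires it; however, given that all policies here are defined purely in terms of workload values and placement vectors (themselves exchangeable), this can be discharged by a direct symmetry argument on \eqn{eq:Main_Recursion}.
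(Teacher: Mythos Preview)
Your proposal is correct and follows essentially the same route as the paper's proof: induction on $n$, with the inductive step combining Theorem~\ref{thm:Egalitarianism} (to compare the two routing vectors under the common workload $W_n^{\mathcal{WF}}$ or $W_n^{\mathcal{BS}}$) and Lemma~\ref{lemma:MonotonicityOfBR} (to pass from $W_n^{\mathcal{WF}}$ to $W_n^{\mathcal{BR}}$ on the $\mathcal{BR}$ side), followed by preservation of $\le^{icx}$ under the shift-and-positive-part map. You are in fact slightly more careful than the paper in that you explicitly carry exchangeability of $W_n^{\pi}$ as part of the invariant, which is needed to invoke the $\le^{icx}$ conclusion of Theorem~\ref{thm:Egalitarianism}; the paper uses this implicitly.
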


\begin{proof}
%
%
We show the comparison result for a system with $\mathcal{BS}$ and a system with $\mathcal{BR}$; the argument for comparison for $\mathcal{WF}$ and $\mathcal{BR}$ is analogous.

Suppose that the two systems are fed with arrivals as given by the same point process $\Pi$. Thus, the sequence of interarrival times $\{\tau_n\}_0^\infty$ is the same for both systems. 

For ease of notation let $W_n, s_n$ represent the vectors associated with $\mathcal{BS}$ with their usual meaning, and let $W'_n, s'_n$ represent those associated with $\mathcal{BR}$. $W_0 \le^{icx}  W'_0$ holds trivially since both systems start empty. 
Now suppose that $W_n  \le^{icx} W'_n$ for a given $n$. We show below that this implies $W_{n+1} \le^{icx}  W'_{n+1}$.

From Theorem~\ref{thm:Egalitarianism} we have that $W_n + cs_n \le^{icx} W_n + cs'_n$. Further, by Lemma~\ref{lemma:MonotonicityOfBR} we have $W_n + cs'_n \le^{icx} W'_n + cs'_n$. Thus, we have $W_n +cs_n \le^{icx} W'_n +cs'_n$. Since $\mu \tau_n {\bf 1}$ has equal entries and $\max(.,0)$ is an increasing and convex operation, we have  $(W_n +cs_n - \mu\tau_n{\bf 1})^+ \le^{icx} (W'_n +cs'_n - \mu\tau_n{\bf 1})^+$, \ie, $W_{n+1} \le^{icx}  W'_{n+1}$. Hence the result holds. 
\end{proof}

The above theorem implies, for example, that each raw moment of the workload at given server under $\mathcal{WF}$ and $\mathcal{BS}$ is less than or equal to that under $\mathcal{BR}$. Similarly, each raw moment of the total workload in the system is lower or equal under $\mathcal{WF}$ and $\mathcal{BS}$ as compared to that under $\mathcal{BR}$. 

However, the above theorem does not directly allow us to compare the delays of requests for each $n$. To see this, recall that delay seen by a request is the $\max$ of the delays in downloading individual blocks, which are random in number. Further, a more unbalanced workload $W'_n$ may have more empty servers than $W_n$. The next arrival could, for example, have the associated blocks stored on the servers which are empty in $W'_n$ and not in $W_n$.
 
%
%

The following theorem compares delays of requests under both the policies. 

\begin{theorem}\label{thm:DelayComparision}
Consider a system which starts empty.  The delays seen by requests under the $\mathcal{WF}$, $\mathcal{BS}$, and $\mathcal{BR}$ policies satisfy the following:
 $$D^{ \Scale[0.6]{\mathcal{WF}} }_n \le^{icx}  D^{ \Scale[0.6]{\mathcal{BR}} }_n \text{ and } D^{ \Scale[0.6]{\mathcal{BS}} }_n \le^{icx}  D^{ \Scale[0.6]{\mathcal{BR}} }_n, \;\;\; n=0,1,\ldots$$
\end{theorem}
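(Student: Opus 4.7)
The plan is to bootstrap the workload comparison of Theorem~\ref{thm:Workload_comparision} to a delay comparison by conditioning on the workload and exploiting the fact that, for any realization of that workload, the policy $\pi \in \{\mathcal{WF}, \mathcal{BS}\}$ never chooses a worse routing than $\mathcal{BR}$. I would couple the arrival data $(\kappa_n, a_n)$ across the two systems (they have the same joint distribution under the symmetry assumption) and represent the $\mathcal{BR}$ routing as a deterministic function of $(\kappa_n, a_n)$ together with an independent uniform tag $\xi_n$ that encodes the random choice of the ``extra-block'' servers; crucially $\xi_n$ is independent of $W_n^{\mathcal{BR}}$.

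The first step is the pointwise inequality
$$D^{\pi}(w, a, k) \le D^{\mathcal{BR}}(w, a, k, \xi) \quad \text{for every } w, a, k, \xi,$$
where $D^{\pi}$ and $D^{\mathcal{BR}}$ are viewed as deterministic functions of the workload, placement, chunk count, and (for $\mathcal{BR}$) the auxiliary randomness. For $\mathcal{BS}$ this is an exchange argument within the class of routings with $s^i \in \{\lfloor k/m\rfloor, \lfloor k/m\rfloor + 1\}$: sending the $k' = k - m\lfloor k/m\rfloor$ extra blocks to the $k'$ least-loaded feasible servers minimizes $\max_{i:s^i>0}(w^i+cs^i)$ within this class, while $\mathcal{BR}$ merely picks a random subset from the same class. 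For $\mathcal{WF}$, water-filling is the classical solution to the min-max load assignment problem and minimizes $D$ over the larger class of all feasible routings, to which $\mathcal{BR}$'s choice also belongs.

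For the second step, for any increasing convex $f:\mathbb{R}_+\to\mathbb{R}$, I would define
$$h_f(w) := E\!\left[f\bigl(D^{\mathcal{BR}}(w, a_n, \kappa_n, \xi_n)\bigr)\right],$$
so that $E[f(D_n^{\mathcal{BR}})] = E[h_f(W_n^{\mathcal{BR}})]$ by the aforementioned independence. Next I would check that $h_f$ is increasing and convex on $\mathbb{R}^m$: for each fixed realization $(a,k,\xi)$ the map $w\mapsto \max_{i:s^i>0}(w^i+cs^i)$ is a finite pointwise maximum of affine functions, hence convex, and is componentwise increasing, and composing with the increasing convex $f$ and averaging over $(a_n,\kappa_n,\xi_n)$ preserves both properties. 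Combining the pointwise inequality of the first step with monotonicity of $f$ gives $E[f(D_n^{\pi})] \le E[h_f(W_n^{\pi})]$, and then the ordering $W_n^{\pi}\le^{icx} W_n^{\mathcal{BR}}$ from Theorem~\ref{thm:Workload_comparision}, applied to the increasing convex function $h_f$, yields $E[h_f(W_n^{\pi})] \le E[h_f(W_n^{\mathcal{BR}})] = E[f(D_n^{\mathcal{BR}})]$, which is the desired $D_n^{\pi} \le^{icx} D_n^{\mathcal{BR}}$.

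The main obstacle is the first step. The delay $D = \max_{i:s^i>0}(w^i+cs^i)$ ignores unserved servers, so it cannot be recovered from the augmented workload $w+cs$ by applying an increasing convex functional, and the caveat stated just before the theorem rules out any argument based on simply propagating the workload ordering $W_n^{\pi}+cs_n^{\pi} \le^{icx} W_n^{\mathcal{BR}}+cs_n^{\mathcal{BR}}$ through $\max$: unserved but heavily loaded servers may dominate the unrestricted maximum on the $\mathcal{BR}$ side. The insight that makes the proof work is to use the fact that $\pi$'s routing is optimal---for the actual delay $D$ itself---within the admissible class containing $\mathcal{BR}$'s choice, and to decouple ``who serves'' from ``how loaded the servers already are'' by first conditioning on $W_n$ and only then invoking the icx order.
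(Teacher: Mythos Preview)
Your proof is correct and follows essentially the same two-step logic as the paper's: (i) for a fixed workload $w$, the delay under $\mathcal{WF}$ or $\mathcal{BS}$ is pointwise no larger than under any realization of $\mathcal{BR}$'s routing (your ``first step'', the paper's coupling of block-wise delays), and (ii) for a fixed $\mathcal{BR}$ routing, the delay is an increasing convex function of $w$, so the workload ordering from Theorem~\ref{thm:Workload_comparision} transfers. The only difference is packaging: you wrap the routing randomness into the auxiliary function $h_f(w)=E[f(D^{\mathcal{BR}}(w,a,\kappa,\xi))]$ and then apply $W_n^\pi\le^{icx}W_n^{\mathcal{BR}}$ to $h_f$, whereas the paper conditions on $s'_n$ and takes the same two inequalities in the opposite order before averaging---these are equivalent formulations of the same argument.
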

\begin{proof}
We show this for $\mathcal{BS}$; the argument for $\mathcal{WF}$ is analogous.

For ease of notation, we will use the notation $W_n, s_n$ for random vectors associated with policy $\mathcal{BS}$ with their usual meaning, and $W'_n, s'_n$ for those associated with policy $\mathcal{BR}$. 

For a given $\mathbb R_+^m$ valued vector $r$,  the function $\max_{i: r_i >0 } (x^i + r^i)$ is increasing and convex in $x \in \mathbb R^m$. Thus, for any increasing convex function $g: \mathbb R \to \mathbb R$, $g\left(\max_{i: r_i >0 } (x^i + r^i)\right)$ is an increasing convex function in $x$. Thus, from Theorem~\ref{thm:Workload_comparision} we have
$$
E_{s'_n}g\left(\max_{i: s^{\prime i}_n > 0} (W_n^i+c s^{\prime i}_n) \right) \le E_{s'_n}g\left(\max_{i: s^{\prime i}_n > 0}  ({W}_n^{\prime i}+c s^{\prime i}_n) \right),
$$
where $E_{s'_n}$ denotes the conditional expectaion given $s'_n$.
Note that on both sides of the above inequality we are conditioning on $s'_n$ which is the routing vector associated with $\mathcal{BR}$. 

Recall that under the $\mathcal{BR}$ policy, $s_n$ is independent of the instantaneous workload $W_n$ for each $n$. Using the coupling $\kappa_n = \kappa'_n$ and $a_n=a'_n$, and the definition, given instantaneous workload $W_n$ one can additionally couple the routing vectors $s_n$ and $s'_n$ and the associated $\kappa_n$ block requests under $\mathcal{BS}$ and $\mathcal{BR}$ policies such that the workload seen by the $l\kth$ block in front of it under $\mathcal{BS}$ is lower than that under $\mathcal{BR}$ for each $l\le \kappa_n$ under $W_n$. Thus, we get 
$$
E_{s'_n}g\left(\max_{i: s^i_n > 0}  (W_n^i+ cs^i_n) \right)\le
E_{s'_n}g\left(\max_{i: s^{\prime i}_n > 0}  (W_n^i+c {s'}^i_n) \right).
$$
By combining the previous two inequalities we get
$$E_{s'_n}g\left(\max_{i: s^i_n > 0} (W_n^i+cs^i_n) \right)
\le E_{s'_n}g\left(\max_{i: s^{\prime i}_n > 0}  ({W'}_n^i+c {s'}^i_n) \right)
,$$
from which the result follows by taking expectation on both sides. 
\end{proof}

Recall that $\rho = \lambda \nu/m = c \lambda \frac{\sum_k k\pi_k}m$ is the load factor per server.
The overall system load is $\rho m$.
By exchangeability, the marginal dynamics of the workload at a given server under $\mathcal{BR}$
can be modeled via an $M/GI/1$ FCFS queueing system with load $\rho$ bits/sec and service rate $\mu$ bits/sec. 
Since the number of servers $m$ is finite, the system is stable (asymptotically stationary) if $\rho < \mu$.
From Theorem~\ref{thm:Workload_comparision} and the ergodicity of the arrival process,
it follows that the system is stable under $\mathcal{WF}$ and $\mathcal{BS}$ as well if $\rho < \mu$. 

Note that,
for general $\alpha_k$,
the delays under the $\mathcal{BR}$ policy
are statistically equivalent to the delays
obtained when $\alpha_k = k$ for each $k$, \ie, when the code rate is
equal to $1$. There are prior works which study gains
of erasure-coding via simulations \cite{SLR14,JLS14},
experiments\cite{JDP05,LiK14}, and analytically but under
mean-filed type asymptotic approximations and under
exponential service time assumptions \cite{LRS16}.
To the best of our knowledge, Theorem \ref{thm:DelayComparision}
is the first rigorous analytical result which compares delays
for finite systems employing erasure codes with different code rates.
Further, we would like to stress that the result holds under
general statistical assumptions for service requirements.

\section{Association and Delay Bounds}\label{sec:Bounds}
In this section, we use the notion of association of random variables to obtain computable bounds on the delays of requests. 

\begin{definition}
The random variables $X_1,X_2,\ldots,X_k$ are {\em associated} if, with notation $X=(X_1,X_2,\ldots,X_k)$, the inequality 
$$  E[f(X)g(X)] \ge E[f(X)]E[g(X)]$$
holds for each pair of increasing functions $f,g:\mathbb R^k \to R$ for which $E[f(X)]$, $E[g(X)]$, and $E[f(X)g(X)]$ exist. 

We say that a random vector $X$ is associated if its entries are associated. Similarly, we say that a set of random variables is associated if its elements are associated. 
\end{definition}

To understand the power of association, consider the following definition and subsequent proposition. 

\begin{definition}
Consider random variables $X_1,X_2,\ldots,X_k$. We say that $\tilde X_1, \tilde X_2, \ldots, \tilde X_k$ are independent versions of the random variables $X_1,\ldots,X_k$ if the $\tilde X_1,\ldots, \tilde X_k$ are mutually independent, and if $X_i$ and $\tilde X_i$ are identically distributed for $1\le i \le k$. 
\end{definition}

\begin{proposition}[see \cite{BaB03} Chap 4.3] \label{prop:ComparingMax}
Suppose that random variables $X_1,\ldots,X_k$ are associated and that $\tilde X_1, \ldots, \tilde X_k$ are their independent versions. Then the following holds:
$$\max_{1\le i \le k} X_i \le^{st} \max_{1\le i \le k} \tilde X_i.$$
\end{proposition}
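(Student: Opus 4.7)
The plan is to reduce the stochastic dominance claim to a CDF comparison and then exploit association through indicator functions. Recall that $X \le^{st} Y$ is equivalent to $P(X \le t) \ge P(Y \le t)$ for every $t \in \mathbb R$. Applied to the two maxima, the target inequality becomes
\[
P\!\left(\max_{1 \le i \le k} X_i \le t\right) \;\ge\; P\!\left(\max_{1 \le i \le k} \tilde X_i \le t\right) \quad \text{for all } t.
\]
By independence of the $\tilde X_i$ and the fact that $\tilde X_i$ and $X_i$ are identically distributed, the right-hand side equals $\prod_{i=1}^k P(X_i \le t)$. The left-hand side equals $E\!\left[\prod_{i=1}^k \mathbf 1_{\{X_i \le t\}}\right]$, so the goal reduces to
\[
E\!\left[\prod_{i=1}^k \mathbf 1_{\{X_i \le t\}}\right] \;\ge\; \prod_{i=1}^k E\!\left[\mathbf 1_{\{X_i \le t\}}\right].
\]

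The next step is to observe that the map $x \mapsto \mathbf 1_{\{x_i \le t\}}$ is a (non-strictly) decreasing, bounded function of $X$, and that the defining inequality of association, although stated for pairs of increasing functions, also holds for pairs of decreasing functions by replacing $f,g$ with $-f,-g$. I would then extend this two-function inequality to the product of $k$ non-negative decreasing functions by induction on $k$: if $h := \prod_{i=1}^{k-1} \mathbf 1_{\{X_i \le t\}}$ is non-negative and decreasing (which it is, being a product of such), then association gives $E[h \cdot \mathbf 1_{\{X_k \le t\}}] \ge E[h]\,E[\mathbf 1_{\{X_k \le t\}}]$, and the inductive hypothesis finishes the bound.

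Putting these pieces together yields $P(\max_i X_i \le t) \ge \prod_i P(X_i \le t) = P(\max_i \tilde X_i \le t)$ for every $t$, which is precisely $\max_i X_i \le^{st} \max_i \tilde X_i$. The only non-routine point is justifying the extension of association from two to $k$ factors and from increasing to decreasing bounded functions; everything else is a mechanical rewriting of the CDF of a maximum. If the paper's reference to \cite{BaB03} already packages the $k$-fold extension as a standing fact, then the proof reduces to a one-line application of that fact to the indicator functions $\mathbf 1_{\{X_i \le t\}}$.
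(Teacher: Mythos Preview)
Your argument is correct and is the standard proof of this fact. The paper does not prove the proposition at all: it is quoted as a known result from \cite{BaB03}, Chapter~4.3, with no proof supplied. Your CDF-reduction followed by the inductive application of association to the decreasing indicators $\mathbf 1_{\{X_i\le t\}}$ is exactly the classical route, and the two non-routine points you flag (passing from increasing to decreasing test functions via $f\mapsto -f$, and extending from two to $k$ factors by peeling off one non-negative decreasing factor at a time) are handled correctly.
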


Now consider $m$ different queues with dependent workloads, as in the previous section. If we can show that the arrival of a request sees associated workloads, then we can bound its delay by using the independent version of the workloads. Several works in the literature for large-scale systems, \eg~\cite{BLP10,LRS16}, consider the marginal distribution at a given server and study its properties by assuming that the dynamic at any other server is independent of that under the given server; an assumption which is justified in these works as a `mean-field approximation'. In \cite{BLP10}, the queue associated with a given server is called a `queue at the cavity'. With the association property, can analyze a system without resorting to the mean-field approximation.

Recall that under the $\mathcal{BR}$ policy, the selection of servers $s_n$ is independent of the workload $W_n$. Upon an arrival, a server gets no additional workload with probability $1- \sum_{k=1}^{m} \frac{k}{m} \pi_k - \sum_{k = m+1}^{\infty} \pi_k$, and gets workload which is a multiple of $c$ otherwise.  One can show that, given that the request is of size $kc$, the server gets the load $c\left(\floor{\frac{k}{m}}+1\right)$ with probability $\frac{k}{m} - \floor{\frac{k}{m}}$ and the load $c\floor{\frac{k}{m}}$ with probability $1-\frac{k}{m} + \floor{\frac{k}{m}}$. 
Thus, for $i=1,\ldots,m$, the workload process at $i\kth$ server, namely $\{W^i_n\}_{n=0}^\infty$, in isolation is stochastically equivalent to workload seen by arrivals in a {\em Cavity Queue} which is as defined below. 

\begin{definition}\label{def:CavityQueue}
A {\em Cavity Queue} is an $M/GI/1$ FCFS queue which starts empty at time $t=0$, has Poisson arrivals with rate $\lambda m$, service rate $\mu$ bits/sec, and service requirement in bits with probability mass function on set $\{0,c,2c,\ldots\}$ given as follows:
$$\tilde \pi(0) = 1- \sum_{k=1}^{m} \frac{k}{m} \pi_k - \sum_{k = m+1}^{\infty} \pi_k,$$
and for $l=1,2,\ldots$
$$\tilde \pi(lc)   =    \sum_{k = (l-1)m+1}^{lm} (\frac{k}{m} - l+1) \pi_k + \sum_{k = lm+1}^{(l+1)m-1} (1-\frac{k}{m} + l) \pi_k.$$
\end{definition}

The $M/GI/1$ FCFS queues are well studied in the literature. In particular, the following lemma well-known as Pollaczek-Khinchine formula describes the steady state workload distribution of jobs in these queues. Below, we view service time of a job as the ratio of its service requirement in bits and the service rate of the server in bits/sec. 

\begin{lemma}[\!\!\cite{Tak62}]\label{lemma:PollaczekKhinchine}
Consider an $M/GI/1$ FCFS queue with arrival rate $\tilde\lambda$.
Let $\sigma$ be a random variable with distribution equal to
that of the service times of jobs. Let $\psi_\sigma(s) = E[e^{-s\sigma}]$. Suppose that $\tilde\lambda E[\sigma] < 1$. In steady state the workload $W$ has Laplace Transform $\mathcal{G}(.)$ (\ie, $\mathcal{G}(s) = E[e^{-sW}]$) which can be given as: 
\begin{equation}\label{eq:pkb}
 \mathcal{G}(s) = \frac{(1-\tilde\lambda E[\sigma]) s}{s-\tilde\lambda\left(1-\psi_\sigma(s)\right)}.
\end{equation}
\end{lemma}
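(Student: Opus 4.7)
My plan is to prove the Pollaczek--Khinchine formula through a stationary generator argument applied to the continuous-time workload process $V(t)$, which in the $M/GI/1$ setting is a piecewise linear process with slope $-1$ during busy periods (and $0$ at the origin), together with upward jumps of size $\sigma$ at the Poisson arrival epochs. Applying the operator $x \mapsto e^{-sx}$ to $V(t)$ and computing the drift, one obtains that for each $s > 0$,
\[
\frac{d}{dt} E\bigl[e^{-sV(t)}\bigr] = s\, E\bigl[e^{-sV(t)} \mathbb{1}_{V(t)>0}\bigr] + \tilde\lambda \bigl(\psi_\sigma(s) - 1\bigr) E\bigl[e^{-sV(t)}\bigr],
\]
since while $V(t) > 0$ the workload decreases at rate $1$ (contributing $+s\, e^{-sV(t)}$ to the generator), and each Poisson jump of size $\sigma$ multiplies $e^{-sV}$ by $e^{-s\sigma}$. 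At the boundary $V(t)=0$, the deterministic drift vanishes, which is precisely why the indicator $\mathbb{1}_{V(t)>0}$ appears.

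In steady state the left-hand side vanishes, and writing $E[e^{-sV}\mathbb{1}_{V>0}] = \mathcal{G}(s) - \Prb{V=0}$ yields the functional equation
\[
s\bigl(\mathcal{G}(s) - \Prb{V=0}\bigr) + \tilde\lambda\bigl(\psi_\sigma(s)-1\bigr)\mathcal{G}(s) = 0,
\]
which rearranges to $\mathcal{G}(s) = \frac{s\, \Prb{V=0}}{s - \tilde\lambda(1-\psi_\sigma(s))}$. To identify $\Prb{V=0}$, I would invoke work conservation: on any long interval, the server is busy a fraction of time equal to the offered load $\rho \define \tilde\lambda E[\sigma]$ (using the ergodic theorem for the input compound Poisson process and the stability assumption $\rho<1$), so $\Prb{V=0} = 1-\rho$. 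Substituting gives exactly the stated formula \eqn{eq:pkb}. As a sanity check, one can verify $\mathcal{G}(0)=1$ by expanding the numerator and denominator to first order in $s$ using $\psi_\sigma(s) = 1 - sE[\sigma] + O(s^2)$.

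The main obstacle is making the generator calculation rigorous: one must justify (i) existence of a stationary distribution (which holds because $\rho<1$ and interarrivals are exponential, making the embedded chain positive recurrent), (ii) interchange of differentiation and expectation in obtaining the balance equation (handled by dominated convergence using $|e^{-sV}| \le 1$ for $s\ge 0$), and (iii) the identity $\Prb{V=0}=1-\rho$ (standard, via Little's law applied to the busy periods or via PASTA together with the strong law). An alternative and perhaps cleaner route would be to write the Lindley recursion $W_{n+1} = (W_n + \sigma_n - \tau_n)^+$ in stationarity, take Laplace transforms with the help of the Wiener--Hopf decomposition, and use the memorylessness of the exponential $\tau_n$ to collapse the factorization — but this essentially recovers the same identity after more algebraic bookkeeping, so the level-crossing/generator approach above is the one I would use.
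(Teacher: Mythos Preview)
Your generator/level-crossing derivation is a correct and standard route to the Pollaczek--Khinchine formula, and the outline you give (drift of $e^{-sV(t)}$, stationarity, identification of the atom at zero via $\Prb{V=0}=1-\rho$) is sound. Note, however, that the paper does not actually prove this lemma: it is stated as a well-known result and attributed to Tak\'acs \cite{Tak62}, with no argument given. So there is no ``paper's own proof'' to compare against; you are supplying a proof where the authors simply invoke a reference. If you want your write-up to match the paper's treatment, a one-line citation suffices; if you want to include a proof, what you have is fine, with the caveats you already flagged about justifying existence of the stationary law and the interchange of derivative and expectation.
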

Below, we use (\ref{eq:pkb}) to obtain
performance bounds on the systems of our interest by using
association property along with Proposition~\ref{prop:ComparingMax}. 
The following subset of the many known properties of association can come handy in proving association of random variables (RVs). 
\begin{proposition}[see \cite{BaB03} Chap 4.3]\label{prop:AssociationProperties}
The following statements hold.
\begin{enumerate}[(i)]
\item The set consisting of a single RV is associated. 
\item The union of independent sets of associated RVs forms a set of associated RVs
\item Any subset of a set of associated RVs forms a set of associated RVs
\item  For a non-decreasing function $\phi: \mathbb R^m \rightarrow \mathbb R$ and 
 associated RVs $\{X_1,\ldots,X_m\}$, the random variables 
$$\{\phi(X_1,\ldots,X_m),X_1,\ldots,X_m\}$$
are associated. 
\end{enumerate}
\end{proposition}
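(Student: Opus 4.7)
My plan is to verify each of the four items directly from the definition of association, following the classical approach of Esary, Proschan and Walkup; the strategy is to show $E[f(X)g(X)]\ge E[f(X)]E[g(X)]$ for componentwise non-decreasing $f,g$ by a combination of symmetrization, conditioning, and composition arguments.

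For item (i) I would use the symmetrization trick: introduce an independent copy $X'$ of $X$, and note that $(f(X)-f(X'))(g(X)-g(X'))\ge 0$ almost surely because $f$ and $g$ vary in the same direction with their argument; taking expectations yields $2E[f(X)g(X)]-2E[f(X)]E[g(X)]\ge 0$, which is essentially Chebyshev's sum inequality. Items (iii) and (iv) are then short applications of the definition. For (iii), any pair of non-decreasing $f,g$ on the coordinates of a subset $S$ can be extended to non-decreasing functions on $\mathbb R^m$ that ignore the coordinates outside $S$, so the inequality for the full associated vector specializes to the desired inequality on $S$. For (iv), given non-decreasing $f,g:\mathbb R^{m+1}\to\mathbb R$, I would introduce $\tilde f(x)=f(\phi(x),x)$ and $\tilde g(x)=g(\phi(x),x)$; these remain componentwise non-decreasing on $\mathbb R^m$ because $\phi$ and each $f,g$ are non-decreasing in all their arguments, so applying the association of $\{X_1,\ldots,X_m\}$ to $\tilde f,\tilde g$ produces exactly the association inequality for $\{\phi(X_1,\ldots,X_m),X_1,\ldots,X_m\}$.

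The main obstacle is item (ii), which requires iterated conditioning combined with the fact that partial expectation over an independent vector preserves coordinatewise monotonicity. Given independent associated vectors $X=(X_1,\ldots,X_n)$ and $Y=(Y_1,\ldots,Y_k)$ and non-decreasing $f,g:\mathbb R^{n+k}\to\mathbb R$, I would first condition on $X$: since $y\mapsto f(X,y)$ and $y\mapsto g(X,y)$ are non-decreasing in $y$ and the $Y_j$'s are associated, this yields $E[f(X,Y)g(X,Y)\mid X]\ge E[f(X,Y)\mid X]\,E[g(X,Y)\mid X]$ almost surely. Next, set $F(x)=E[f(x,Y)]$ and $G(x)=E[g(x,Y)]$; because $x\mapsto f(x,y)$ and $x\mapsto g(x,y)$ are non-decreasing for every fixed $y$ and $Y$ is independent of $X$, integrating over $Y$ preserves monotonicity, so $F,G$ are non-decreasing on $\mathbb R^n$. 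Applying the association of $\{X_1,\ldots,X_n\}$ to $F,G$ then gives $E[F(X)G(X)]\ge E[F(X)]\,E[G(X)]=E[f(X,Y)]\,E[g(X,Y)]$. Finally, taking the outer expectation in the conditional inequality produces $E[f(X,Y)g(X,Y)]\ge E[F(X)G(X)]$, and chaining these two bounds completes (ii).
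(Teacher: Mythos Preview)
The paper does not supply its own proof of this proposition; it is quoted as a known result with a pointer to \cite{BaB03}, Chapter~4.3. Your argument is correct and is exactly the classical Esary--Proschan--Walkup proof that underlies the cited reference, so there is nothing to compare against and no gap to flag.
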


Before providing our main results for this section,
we need the following additional notation.

\begin{definition}\label{def:ConditionedDelay}
For each $k,n \in \mathbb Z_k$, let $D_n^k$ denote the delay seen by the $n\kth$ arrival given that the size of the requested file is $kc$ bits, that is, 
\begin{multline}
Pr\left( D_n^k \le t\right) = Pr\left(\mathop{\max}_{i: s_n^i > 0}  W_n^i+ c s_n^i  \le t \Big| \kappa_n = k \right), \\ t \in \mathbb R  \text{ and } k,n \in \mathbb Z_+.
\end{multline}
\end{definition}

Recall that for each $k,n\in \mathbb Z_+$, the $n\kth$ request for a file is of size $kc$ bits with probability $\pi_k$ and the $k$ requests for coded blocks are routed to different servers upon the $n\kth$ arrival as per the chosen policy.

\begin{definition}\label{def:Theta_m}
Let $\Theta(m)$ be the class of probability mass functions $\{\pi \}$ such that for each $\pi = (\pi_k:k\in\mathbb Z_+)$ in class $\Theta$, a system with $m$ servers operating under $\mathcal{BR}$ policy has the routing vector $s_n$ which is associated for each $n$. 
\end{definition}

 In this paper we will be content to note that $\Theta(m)$ is a rich class of p.m.f.s which includes $Binomial(p,m)$ distribution as well as $Geometric(p)$ distribution for each $p \in [0,1]$. 

The following theorem, proved in the Appendix,
says that for any $\pi$ in $\Theta(m)$, we get an upper bound on
the delay seen by the $n\kth$ arrival by pretending that the
workloads at the $m$ servers `evolved independently in the past'.

\begin{theorem}\label{thm:DelaysViaAssociation}
Consider a system with $m$ servers which starts empty. For each $k\in \mathbb Z_+$, requests for files of size $kc$ bits, equivalently batch requests for $k$ blocks of $c$ bits each, arrive as per an independent point process with rate $\pi_k \lambda m$ and are routed to different servers upon arrival. Servers serve the block requests in FCFS fashion at rate $\mu$ bits per second. 

Suppose that $\pi=(\pi_k:k\in\mathbb Z_+)$ belongs to class $\Theta(m)$ (see Definition~\ref{def:Theta_m}). Then the following statements hold:
\begin{enumerate}
\item 
The workload $W_n\in \mathbb{R}^m$, 
at the $m$ servers seen by the $n\kth$ arrival
under $\mathcal{BR}$ is associated for each $n$. 
\item For $i=1,2,\ldots,m$, let $\{\tilde W_n^i\}_0^\infty$ represent the workload seen by arrivals in an independent Cavity Queue as in Definition~\ref{def:CavityQueue}. Let $^k\!s$ be a typical routing vector $s$ under $\mathcal{BR}$ subject to $| s| = k$.  Under either $\mathcal{WF}$, $\mathcal{BS}$,
or $\mathcal{BR}$, the conditional delay $D_n^k$ of Definition \ref{def:ConditionedDelay} satisfies the following property: for each $k,n \in \mathbb Z_+$:
\begin{equation}
D_n^k \le^{icx} \mathop{\max}_{i:\; ^k\!s^i>0}  \tilde W_n^i+ c \; ^k\!s^i.
\end{equation}
\end{enumerate}

\end{theorem}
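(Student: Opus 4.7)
The plan is to prove the two parts separately. For Part~1, I proceed by induction on $n$. The base case $n=0$ is trivial, since $W_0 = \mathbf{0}$ is deterministic and hence associated. For the inductive step, assume $W_n$ is associated. Under $\mathcal{BR}$, the routing vector $s_n$ is chosen independently of $W_n$, and its components are associated by the hypothesis $\pi \in \Theta(m)$; the inter-arrival time $\tau_n$ is a scalar random variable independent of $(W_n, s_n)$. By Proposition~\ref{prop:AssociationProperties} items (i)--(ii), the combined set $\{W_n^i\}_{i=1}^m \cup \{s_n^i\}_{i=1}^m \cup \{-\tau_n\}$ is associated as a union of independent associated sets. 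Since
\[ W_{n+1}^i = \brac{W_n^i + c\, s_n^i + \mu(-\tau_n)}^{+} \]
is a non-decreasing function of the three variables $(W_n^i, s_n^i, -\tau_n)$ drawn from the above associated set, Proposition~\ref{prop:AssociationProperties}(iv) applied coordinate-wise (together with (iii) to pass to the relevant subset) yields that $\{W_{n+1}^i\}_{i=1}^m$ is associated. Rewriting $\tau_n$ as $-(-\tau_n)$ is essential to keep each coordinate map monotone non-decreasing.

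For Part~2, I first treat $\mathcal{BR}$ and then extend to $\mathcal{WF}$ and $\mathcal{BS}$. Under $\mathcal{BR}$, $s_n$ is independent of $W_n$, and conditionally on $\kappa_n = k$ it has the distribution of a typical routing vector ${}^k\!s$. Fix a realization and condition on $s_n = {}^k\!s$; by independence $W_n$ remains associated under this conditioning, hence so do the deterministic shifts $\{W_n^i + c\, {}^k\!s^i : {}^k\!s^i > 0\}$. Applying Proposition~\ref{prop:ComparingMax} gives
\[ \max_{i:\, {}^k\!s^i > 0} \brac{W_n^i + c\, {}^k\!s^i} \le^{st} \max_{i:\, {}^k\!s^i > 0} \brac{\tilde W_n^i + c\, {}^k\!s^i}, \]
where the $\tilde W_n^i$ on the right are mutually independent copies sharing the marginal of $W_n^i$. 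As noted just after Definition~\ref{def:CavityQueue}, under $\mathcal{BR}$ the marginal workload process at each server is equivalent in law to the Cavity Queue, so these independent marginals coincide with those of the independent Cavity Queue workloads appearing in the theorem. Averaging over ${}^k\!s$ preserves $\le^{st}$, which implies $\le^{icx}$, establishing the bound for $\mathcal{BR}$.

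For $\mathcal{WF}$ and $\mathcal{BS}$, I invoke the conditional analogue of Theorem~\ref{thm:DelayComparision}. The coupling used in its proof fixes $\kappa_n = \kappa'_n$ and $a_n = a'_n$, so the inequality persists when one further conditions on $\kappa_n = k$; this yields $D_n^{k,\mathcal{WF}} \le^{icx} D_n^{k,\mathcal{BR}}$ and $D_n^{k,\mathcal{BS}} \le^{icx} D_n^{k,\mathcal{BR}}$. Chaining with the $\mathcal{BR}$ bound from the previous paragraph closes the argument. The principal obstacle is Part~1: one must express each $W_{n+1}^i$ as a coordinate-wise non-decreasing function of an associated set, which is precisely what the $-\tau_n$ rewriting enables. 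Once this monotone recursion is secured, Part~2 reduces to a clean combination of Proposition~\ref{prop:ComparingMax}, the Cavity Queue marginal equivalence, and the delay comparison of Theorem~\ref{thm:DelayComparision}.
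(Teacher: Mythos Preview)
Your proposal is correct and follows essentially the same approach as the paper: induction on $n$ using Proposition~\ref{prop:AssociationProperties} for Part~1 (with the same $-\tau_n$ trick to preserve monotonicity), and for Part~2 the combination of Proposition~\ref{prop:ComparingMax} applied to the associated $W_n$ under $\mathcal{BR}$, the Cavity Queue marginal equivalence, and the conditional version of Theorem~\ref{thm:DelayComparision} for $\mathcal{WF}$ and $\mathcal{BS}$. The only cosmetic difference is that the paper reduces to a deterministic routing vector via exchangeability whereas you condition on ${}^k\!s$ and average; these are equivalent.
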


Here is now a uniform bound in $n$.

\begin{theorem}\label{thm:BoundViaSteadyState}
Consider a system satisfying
the assumptions of Theorem \ref{thm:DelaysViaAssociation}.
Suppose that $\rho = c \lambda \sum_{k=0}^{\infty}{k\pi_k}/m < \mu$. 
For $i=1,2,\ldots,m$, let $\tilde W^i$ represent the stationary
workload of an independent Cavity Queue.
Then,
under either $\mathcal{WF}$, $\mathcal{BS}$, or $\mathcal{BR}$,
the conditional delay $D_n^k$ satisfies the following property: 
for each $k,n \in \mathbb Z_+$:
\begin{equation}
D_n^k \le^{icx} \mathop{\max}_{i:\; ^k\!s^i>0}  \tilde W^i + c \; ^k\!s^i.
\end{equation}
\end{theorem}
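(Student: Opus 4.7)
The plan is to reduce this uniform bound to the transient bound of Theorem~\ref{thm:DelaysViaAssociation} by showing that, in the independent Cavity Queue, the transient workload $\tilde W_n^i$ at any fixed index $n$ is stochastically dominated by the stationary workload $\tilde W^i$. Combined with the monotonicity of the $\max$ operation and the transitivity of $\le^{icx}$, this will yield the theorem.

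First I would recall the conclusion of Theorem~\ref{thm:DelaysViaAssociation}, which gives
$$ D_n^k \;\le^{icx}\; \mathop{\max}_{i:\; ^k\!s^i>0}\,\tilde W_n^i + c\,{}^k\!s^i, $$
where the $\tilde W_n^i$ are workloads of $m$ mutually independent Cavity Queues started empty at time $t=0$, and ${}^k\!s$ is a routing vector (with $|{}^k\!s|=k$) independent of the queues. It therefore suffices to prove
$$ \mathop{\max}_{i:\; ^k\!s^i>0}\,\tilde W_n^i + c\,{}^k\!s^i \;\le^{icx}\; \mathop{\max}_{i:\; ^k\!s^i>0}\,\tilde W^i + c\,{}^k\!s^i. $$

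Next I would establish the key monotonicity fact: for the Cavity Queue (an $M/GI/1$ FCFS queue with load $\rho<\mu$ started empty), the workload is stochastically increasing in $n$ and converges in distribution to the stationary workload $\tilde W^i$ described by the Pollaczek–Khinchine formula of Lemma~\ref{lemma:PollaczekKhinchine}. This is the classical Loynes construction: using the same arrival process and service times, one can build the workload sequence on a common probability space so that the map $n\mapsto \tilde W_n^i$ is pointwise non-decreasing and converges almost surely to a random variable distributed as $\tilde W^i$. In particular $\tilde W_n^i \le^{st} \tilde W^i$ for every $n$, and by independence of the $m$ Cavity Queues the coupling can be done jointly, giving
$$ (\tilde W_n^1,\ldots,\tilde W_n^m) \;\le^{st}\; (\tilde W^1,\ldots,\tilde W^m). $$

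Finally I would condition on ${}^k\!s$ (which is independent of both workload vectors) and apply the $\le^{st}$ inequality through the coordinatewise increasing function $x\mapsto \max_{i:\,{}^k\!s^i>0}(x^i + c\,{}^k\!s^i)$, yielding the same inequality between the maxima in the $\le^{st}$ sense and hence, a fortiori, in the $\le^{icx}$ sense. Integrating over ${}^k\!s$ and chaining with the transient bound via transitivity of $\le^{icx}$ completes the proof. The only delicate point is the Loynes-type monotonicity and its joint coupling across the $m$ independent queues, but this follows from the standard subadditive construction once stability $\rho<\mu$ and the empty initial condition are in force; everything else is an immediate consequence of Theorem~\ref{thm:DelaysViaAssociation} and the fact that $\max$ is coordinatewise non-decreasing.
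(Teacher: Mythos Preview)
Your proposal is correct and follows essentially the same approach as the paper: both invoke Theorem~\ref{thm:DelaysViaAssociation} for the transient bound and then use the standard coupling of an $M/GI/1$ queue started empty with its stationary version to pass to the steady-state workloads. You have simply spelled out the Loynes-type coupling and the monotonicity-through-$\max$ step that the paper leaves implicit in the phrase ``standard coupling arguments.''
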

\begin{proof}
This follows from Theorem~\ref{thm:DelaysViaAssociation} and noting that,
using standard coupling arguments, an $M/GI/1$ queue starting
empty at time $t=0$ and its version in equilibrium can be coupled
in such a way that the former is always lower than the latter.
\end{proof}

The above bound clearly reflects the impact of the local dynamics
at individual servers as well as the global view seen by arrivals.
As we shall see, it can be computed using Lemma \ref{lemma:PollaczekKhinchine}
and using extremal statistics. 

In what follows, we focus on $\pi$ such that $\pi_k =0$ for each $k>m$. 
Such a case is perhaps meaningful for clusters with very large $m$
since files which span each of the thousands of servers may be rare.
Under this scenario, Corollary \ref{cor:ViaMD2} below shows that delays 
admit a particularly simple bound. 

\begin{corollary}\label{cor:ViaMD2}
Consider a system with $m$ servers.
Suppose that $\pi$ belongs to class $\Theta(m)$ and that
$\pi_k = 0$ for each $k > m$. Suppose that $\rho < \mu$.
Let
\begin{equation}
q=q(\lambda,\sigma)=\left| \lambda 
+\frac{W\left(-\lambda\sigma \exp(-\lambda\sigma)\right)}{\sigma}\right|,
\label{eqlamb}
\end{equation}
where $W$ denotes the principal branch of the Lambert W function.
Then, under $\mathcal{WF}$, $\mathcal{BS}$
or $\mathcal{BR}$, the conditional steady state delay $D^k$ satisfies 
\begin{equation}
\label{eq:blam} 
E[D^k] -c \le \frac 1 {q(\frac \rho c, \frac c \mu)} \log k(1+o(1)),
\end{equation}
as $k$ tends to infinity, where $q$ is the function defined in (\ref{eqlamb}).
\end{corollary}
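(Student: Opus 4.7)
My plan is to combine Theorem~\ref{thm:BoundViaSteadyState} with the tail asymptotics of the steady-state workload of an M/D/1 queue and the classical extremal statistic for i.i.d.\ random variables with exponential tails. Since $\pi_k = 0$ for every $k > m$, every request has at most $m$ chunks, so under $\mathcal{BR}$ each of its $k$ chunks is placed on a distinct server; hence the typical routing vector ${}^k\!s$ has exactly $k$ entries equal to $1$ and $m-k$ entries equal to $0$. The right-hand side of the bound in Theorem~\ref{thm:BoundViaSteadyState} then reduces, by exchangeability of the $\tilde W^i$, to $c + \max_{1 \le i \le k} \tilde W^i$, where the $\tilde W^i$ are i.i.d.\ copies of the stationary Cavity Queue workload. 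Since $\le^{icx}$ implies mean-domination, it suffices to show that $E[\max_{1 \le i \le k} \tilde W^i] = (\log k)/q(\rho/c, c/\mu) \cdot (1 + o(1))$.

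For the regime $\pi_k = 0$, $k > m$, the Cavity service-requirement distribution of Definition~\ref{def:CavityQueue} collapses onto $\{0, c\}$ since only the $l=1$ term of the first sum survives. Removing the ``ghost'' zero-service arrivals, the workload dynamics coincide with those of an M/D/1 FCFS queue with effective arrival rate $\rho/c$ and deterministic service time $c/\mu$, whose stability follows from $\rho < \mu$. By Lemma~\ref{lemma:PollaczekKhinchine}, the Laplace transform $\mathcal{G}$ of the steady-state workload is meromorphic in a neighbourhood of the origin, and its rightmost non-trivial singularity is the unique positive root $s^\star$ of the characteristic equation $s = (\rho/c)\bigl(1 - e^{-s c/\mu}\bigr)$. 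Solving this transcendental equation by elementary manipulations involving the principal branch of the Lambert $W$ function yields $s^\star = q(\rho/c, c/\mu)$, exactly as defined in~(\ref{eqlamb}).

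A standard singularity-analysis / Heaviside tail inversion then gives $\Prb{\tilde W > x} = C e^{-q x}(1 + o(1))$ as $x \to \infty$ for some constant $C > 0$. The classical extremal result for i.i.d.\ random variables with such an exponential tail yields
$$E\!\left[\max_{1 \le i \le k} \tilde W^i\right] = \frac{\log k}{q}\,(1 + o(1)), \qquad k \to \infty,$$
and substituting into the bound of the first paragraph delivers~(\ref{eq:blam}). The extension to $\mathcal{WF}$ and $\mathcal{BS}$ is automatic because Theorem~\ref{thm:BoundViaSteadyState} provides the same $\le^{icx}$-bound for all three policies. The main obstacle is the explicit characterisation of the decay rate $q$ via the Lambert $W$ function; once the rightmost pole of $\mathcal{G}$ has been located, everything else follows from well-documented results on M/D/1 tails and on maxima of exponentially-tailed variables (see, \eg, \cite{BMS89}).
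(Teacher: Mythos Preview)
Your proposal is correct and follows essentially the same route as the paper: reduce the Cavity Queue to an M/D/1 queue with arrival rate $\rho/c$ and deterministic service time $c/\mu$, locate the non-trivial root of $s=(\rho/c)\bigl(1-e^{-sc/\mu}\bigr)$ via the Lambert $W$ function to get the exponential tail rate $q$, and then apply the i.i.d.\ extremal asymptotic (which the paper quotes as Theorem~7.4 of \cite{BMS89}). One small slip: that non-trivial root is negative, not positive, and $q$ in~(\ref{eqlamb}) is its absolute value.
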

Note that the last relation implies that
$$ E[D^k]\le \frac 1 {q(\frac \rho c, \frac c \mu)} \log k(1+o(1)),$$
when $k$ tends to infinity. However it turns out that
the formulation in (\ref{eq:blam}) is numerically more accurate
in the prelimit.

Surprisingly, as long as $\pi$ belongs to $\Theta(m)$
and the load per server is fixed,
the above bound does not depend on $\pi$.
However, note that the bound is for 
the conditional delay. The bound on the overall delay still depends on $\pi$. 

This bound scales linearly with $c$ but logarithmically with $k$.  
Thus, for small and medium files, it pays to have smaller chunk size
(see Subsection \ref{sec:ICS} for a quantification of this gain).
This insight also concurs with the results obtained
in \cite{LRS16} under a mean field approximation.


\section{Random Chunk Sizes}\label{sec:RandomChunkSize}
We now study the scenario where the chunk size
may be different for different files, which is
modeled via recursion \eqn{eq:Main_Recursion2}.
Suppose that the random variables $\{c_n\}_0^\infty$
are i.i.d.\ with distribution $\psi$. 
The results of Section \ref{sec:Comparisons} 
readily extend to this scenario. In particular
the statement of Theorems~\ref{thm:Egalitarianism},
~\ref{thm:Workload_comparision} and \ref{thm:DelayComparision} 
can be shown to hold for this scenario as well,
with minor modifications in the proofs.
We skip the details for brevity. 

We now extend the results of Section \ref{sec:Bounds}.
We first modify the notion of Cavity Queue as follows. 

\begin{definition}\label{def:ModifiedCavityQueue}
The {\em Modified Cavity Queue} is an $M/GI/1$ FCFS
queue which starts empty at time $t=0$,
has Poisson arrivals with rate $\lambda m$,
service rate $\mu$ bits/sec, and where service requirement
in bits are \iid\ with distribution equal to that of the 
random variable $X$, where $X$ can is generated as follows:
first, generate a $\mathbb Z_+$ valued random variable $Y$
with probability mass function given as follows:
$$\tilde \pi(0) = 1- \sum_{k=1}^{m} \frac{k}{m} \pi_k - \sum_{k = m+1}^{\infty} \pi_k,$$
and for $l=1,2,\ldots$
$$\tilde \pi(lc)   =    \sum_{k = (l-1)m+1}^{lm} (\frac{k}{m} - l+1) \pi_k + \sum_{k = lm+1}^{(l+1)m-1} (1-\frac{k}{m} + l) \pi_k.$$
Let $Z$ be a random variable with distribution $\psi$. Then, $X = YZ$. 
\end{definition}

Recall that the steady state workload distribution
of an $M/GI/1$ FCFS queue satisfies Lemma~\ref{lemma:PollaczekKhinchine}. 
By using the above notion of Modified Cavity Queue,
analogues of Theorem~\ref{thm:DelaysViaAssociation}
and \ref{thm:BoundViaSteadyState} can be shown to hold
with minor modifications in proofs. Here, we only
reproduce the analogue of Theorem \ref{thm:BoundViaSteadyState} for brevity. 

\begin{theorem}\label{thm:BoundViaSteadyState2}
Consider a system with $m$ servers which start empty. The chunk sizes $\{c_n\}_0^\infty$ are i.i.d.\ with distribution $\psi$. For each $k\in \mathbb Z_+$, batch requests for $k$ blocks (\ie, coded chunks) arrive as per an independent point process with rate $\pi_k \lambda m$ and are routed to different servers upon arrival. Servers serve the block requests in FCFS fashion at rate $\mu$ bits per second. 

Suppose that $\pi=(\pi_k:k\in\mathbb Z_+)$ belongs to class $\Theta(m)$ (see Definition~\ref{def:Theta_m}).
Suppose that $\rho = E[c_1] \lambda \sum_{k=0}^{\infty}{k\pi_k}/m < \mu$. 
For $i=1,2,\ldots,m$, let $\tilde W^i$ represent the stationary
workload of an independent Modified Cavity Queue (see Definition~\ref{def:ModifiedCavityQueue}).
Then,
under either $\mathcal{WF}$, $\mathcal{BS}$, or $\mathcal{BR}$,
the conditional delay $D_n^k$ satisfies the following property: 
for each $k,n \in \mathbb Z_+$:
\begin{equation}
D_n^k \le^{icx} \mathop{\max}_{i:\; ^k\!s^i>0}  \tilde W^i + c \; ^k\!s^i,
\end{equation}
where $c$ is a random variable with distribution $\psi$. 
\end{theorem}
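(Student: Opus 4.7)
The plan is to mirror the proof strategy of Theorem~\ref{thm:BoundViaSteadyState}, which itself reduces to Theorem~\ref{thm:DelaysViaAssociation} plus a monotone coupling of the transient and stationary workloads of an $M/GI/1$ queue. Accordingly, I would first establish a random-chunk-size analogue of Theorem~\ref{thm:DelaysViaAssociation} (the transient bound via an independent Modified Cavity Queue), and then invoke the coupling to pass to the stationary regime.

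For the associated-workload part, I propose induction on $n$, following the original proof verbatim except that the recursion is now $W_{n+1} = (W_n + c_n s_n - \mu \tau_n \mathbf{1})^+$. The base case is trivial. For the induction step, I would argue that $W_n$, $c_n s_n$, and $-\mu \tau_n \mathbf{1}$ are mutually independent and each is individually associated: $W_n$ by the inductive hypothesis, $-\mu \tau_n \mathbf{1}$ by parts (i) and (iv) of Proposition~\ref{prop:AssociationProperties}, and $c_n s_n$ because $c_n$ and $s_n$ are independent associated families (the first by (i), the second as in the proof for constant chunk sizes) so their union is associated by (ii), and each entry $c_n s_n^i$ is non-decreasing in that union. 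Part (ii) of Proposition~\ref{prop:AssociationProperties} gives joint association of the three vectors, and part (iv) applied coordinate-wise (since $(\cdot)^+$ is non-decreasing) yields association of $W_{n+1}$.

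For the marginal-distribution part, the key observation is that the workload process at any fixed server under $\mathcal{BR}$ is an $M/GI/1$ queue whose typical service requirement equals $Y \cdot Z$, where $Y$ is the number of blocks the request places on the tagged server, with distribution $\tilde\pi$ as in Definition~\ref{def:CavityQueue}, and $Z \sim \psi$ is the chunk size, independent of $Y$. This is exactly the service law of the Modified Cavity Queue of Definition~\ref{def:ModifiedCavityQueue}. Combining association of $W_n$ with Proposition~\ref{prop:ComparingMax} (and conditioning on the routing vector, which is independent of the workload under $\mathcal{BR}$) then yields the transient bound for $\mathcal{BR}$. The extension to $\mathcal{WF}$ and $\mathcal{BS}$ then follows exactly as in the proof of Theorem~\ref{thm:DelaysViaAssociation}, leveraging Theorem~\ref{thm:DelayComparision} (which, as noted in the excerpt, carries over to random chunk sizes by inspection of the coupling argument).

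Finally, the stationary bound is obtained by coupling the empty-started Modified Cavity Queue with its stationary counterpart on a common probability space so that the former is pathwise dominated by the latter; this pathwise inequality lifts to $\le^{icx}$ and composes with the transient inequality via monotonicity of the functional $\max_{i:\,^k\!s^i>0}(\tilde W^i + c\, ^k\!s^i)$ in the $\tilde W^i$. The main obstacle is the bookkeeping needed to verify that the product $c_n s_n$ is an associated random vector and that the marginal service distribution factorizes as $Y \cdot Z$; once these two facts are in hand the remainder is a transcription of the constant-chunk argument.
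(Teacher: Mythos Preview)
Your proposal is correct and follows essentially the same approach as the paper, which does not give a detailed proof but states that the result holds ``with minor modifications'' of the constant-chunk-size arguments for Theorems~\ref{thm:DelaysViaAssociation} and~\ref{thm:BoundViaSteadyState}; you have simply spelled out what those modifications are. The one point worth a sentence of extra care is the association of $c_n s_n$: the map $(x,y)\mapsto xy$ is not monotone on all of $\mathbb R^2$, so when invoking Proposition~\ref{prop:AssociationProperties}(iv) you should note that on the (non-negative) support one may replace it by, say, $(x,y)\mapsto x^+ y^+$, which is globally non-decreasing and coincides with the product there.
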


Again consider a scenario where $\pi_k =0$ for each $k>m$. Suppose that $\psi$ is exponential. Then the Modified Cavity Queue is an $M/M/1$ queue. Thus, the following corollary readily follows from the above theorem. 

\begin{corollary}\label{cor:ViaMD2}
Consider a system with $m$ servers. Suppose that $\pi$ belongs to class $\Theta(m)$, and that
$\pi_k = 0$ for each $k > m$. Suppose that the distribution $\psi$ is exponential with mean $c$. 
 Suppose that $\rho < \mu$.
Then, under $\mathcal{WF}$, $\mathcal{BS}$
or $\mathcal{BR}$, the conditional steady state delay $D^k$ satisfies 
\begin{equation}
\label{eq:blam} 
E[D^k] -c \le \frac{\mu}{\mu-\rho} \sum_{l=1}^{k} \frac{1}{l} < \frac{\mu}{\mu-\rho} (\log k + 1).
\end{equation}
\end{corollary}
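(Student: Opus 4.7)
The plan is to specialize Theorem~\ref{thm:BoundViaSteadyState2} to the present setting by first simplifying the routing vector using $\pi_k=0$ for $k>m$, then reducing the Modified Cavity Queue to an M/M/1 queue, and finally computing the expected maximum of $k$ iid exponential sojourn times.

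First, I would observe that since $\pi_k=0$ for $k>m$, every admissible $k$ satisfies $\lfloor k/m\rfloor=0$, so under the $\mathcal{BR}$ policy the routing vector ${}^k\!s$ has entries in $\{0,1\}$ with exactly $k$ ones placed on $k$ distinct servers. Theorem~\ref{thm:BoundViaSteadyState2} therefore reads
\begin{equation*}
D_n^k\;\le^{icx}\;\max_{i\in\mathcal S}\bigl(\tilde W^i+c^i\bigr),
\end{equation*}
where $|\mathcal S|=k$, the $\tilde W^i$ are stationary workloads of independent Modified Cavity Queues, and the $c^i$ are iid with distribution $\psi$, independent of the $\tilde W^i$.

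Next I would identify the Modified Cavity Queue under the present hypotheses. Since $\pi_k=0$ for $k>m$, the integer factor $Y$ in Definition~\ref{def:ModifiedCavityQueue} reduces to a Bernoulli variable with $P(Y=1)=p:=\sum_{k=1}^m(k/m)\pi_k$, so the service requirement $X=YZ$ equals $0$ with probability $1-p$ and is exponential with mean $c$ otherwise. Zero-service arrivals leave the workload unchanged, so by Poisson thinning the queue is equivalent to an M/M/1 with Poisson arrival intensity $\lambda E[\kappa]/m=\rho/c$ and service times exponential with mean $c/\mu$, hence traffic intensity $\rho/\mu<1$. For such an M/M/1 the stationary sojourn time is exponential with rate $(\mu-\rho)/c$; by PASTA the workload $\tilde W^i$ seen by an arriving job has the stationary distribution, and since $c^i$ is an independent exponential chunk size distributed exactly as the service requirement, $\tilde W^i+c^i$ is precisely this sojourn time.

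Finally, because the maximum of $k$ iid rate-$q$ exponentials has mean $H_k/q$ with $H_k=\sum_{l=1}^k 1/l$, taking expectation in the icx bound (the identity is both increasing and convex) yields the claimed logarithmic expression, with the outer inequality following from $H_k<\log k+1$. The main technical point is the verification that the Modified Cavity Queue collapses to an M/M/1 under exponential $\psi$—that Bernoulli thinning of the Poisson input stream produces a Poisson stream of the advertised rate and that the remaining service times are iid exponentials with mean $c$; once this is in place, the remainder is routine extremal statistics for independent exponentials.
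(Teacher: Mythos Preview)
Your approach is essentially that of the paper (which only says ``the Modified Cavity Queue is an $M/M/1$ queue, thus the corollary readily follows from Theorem~\ref{thm:BoundViaSteadyState2}''), and the identification of the thinned Cavity Queue as $M/M/1$ with exponential sojourn law is correct. There is, however, one slip in your reading of Theorem~\ref{thm:BoundViaSteadyState2}: the bound there involves a \emph{single} chunk size $c$ common to all coordinates, namely $\max_{i\in\mathcal S}(\tilde W^i+c)$, not independent $c^i$'s. This matters for the form of the stated inequality: with the single $c$ you get $\max_{i\in\mathcal S}(\tilde W^i+c)=c+\max_{i\in\mathcal S}\tilde W^i$, so after taking expectation the constant $E[c]=c$ separates out and yields exactly the $E[D^k]-c$ on the left-hand side of~\eqref{eq:blam}; it then remains to bound $E[\max_{i\in\mathcal S}\tilde W^i]$ by $E[\max_{i\in\mathcal S}S^i]$ with $S^i=\tilde W^i+c^i$ the i.i.d.\ sojourn times (this step is immediate since $\tilde W^i\le^{st}S^i$ coordinatewise on independent copies). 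Your route with i.i.d.\ $c^i$ from the outset can be justified by one more appeal to association (the vector $(\tilde W^i+c)_i$ is associated, so Proposition~\ref{prop:ComparingMax} lets you pass to independent versions $\tilde W^i+c^i$), but it yields only $E[D^k]\le H_k\cdot(\text{mean sojourn})$ and so misses the $-c$ refinement.
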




\section{Simulation and Performance Evaluation}
\label{sec:sim-rec}

In this section we use our analysis and
simulations in order to develop a better
quantitative understanding of the relative performance
and scaling laws under $\mathcal{WF}$, $\mathcal{BS}$, and $\mathcal{BR}$.

\subsection{Simulation Methodology}

The simulation methodology we selected is not based on the classical discrete
event principles but rather on a direct use of the recurrence 
equations (\ref{eq:Main_Recursion}).
The advantages of the latter on the former
are multiple, in term of generality and of complexity.
This recurrence relation setting is well adapted to handling
deterministic service times and
general routing vectors, whereas 
event driven Markov chain simulation would require
exponentiality assumptions and make the handling of workload based
routing policies cumbersome.
The complexity of $\mathcal{BS}$ is that of a sorting algorithm.  If
the servers containing at most one chunk from the requested file are
sorted in increasing order of their load, then it suffices to take
the $k$ smallest loads if $k\leq m$. When $k>m$, the complexity 
depends on $m$ rather than $k$, as only $k-m\lfloor k/m\rfloor$
servers with the smallest load need to be searched. The complexity is
then in $O(\min(k,m)\log\min(k,m))$. 
The complexity of $\mathcal{WF}$ depends on $k$
whatever its value: one strategy is to first sort the servers
containing at least one chunk of the requested file.
Each time a chunk is requested from one server, its
load increases by $c$, and this server has to be re-inserted in the
ordered list of servers. The complexity 
is then $O(k\log(m+k))$.

In several experiments, the size of the files is at most $mc$ and
$\alpha_k-k\leq 2$. When $\alpha_k\leq m$ for all $k$, then $\mathcal{BS}$
and $\mathcal{WF}$ are exacly the same: as each server contains at most
one chunk of any given file, all the routing vectors are balanced,
and Theorem \ref{thm:Egalitarianism} states the optimality of 
$\mathcal{BS}$ in this case. For this reason, we
will only compare $\mathcal{BS}$ with $\mathcal{BR}$.


An important question is that of the steady state characterization.
For this, we leverage Birkhoff's pointwise ergodic theorem,
which shows that empirical averages based on iterates of the recurrence
equations (\ref{eq:Main_Recursion}) converge to the steady state mean values.
In practice, we perform $10^5$ iterates to estimate each point of the
following plots.

\subsection{Impact of the Delivery Policy}
The first numerical experiments illustrate the
comparison results of Section \ref{sec:Comparisons} and
more precisely Theorem \ref{thm:Workload_comparision}. 
The setting is the following: there are $m=200$ servers;
the distribution $\pi$ is {\it{Binomial}}$(m,p)$, with
$p=0.1, 0.3,0.5$ (which gives an average of 20, 60 and 100 chunks,
respectively); recall that for each value of $p$ this falls within the class of distribution $\Theta(m)$; the server speed is $\mu=1$ and the chunk size is
$ c=10$; the arrival rate is chosen in such a way that
the load per server is always equal to 0.7;
the coding assumptions are that $\alpha_k=k+2$.

Figure \ref{fig:filesize}
compares the mean delay under $\mathcal{BR}$ and $\mathcal{BS}$,
for various values of $p$. 
 The bound obtained in Corollary \ref{cor:ViaMD2}  is also plotted.
Within the range considered in these plots, 
the mean delays increase logarithmically in $k$ for $\mathcal{BR}$.
The bound correctly captures the logarithmic increase w.r.t.\ the 
$\mathcal{BR}$, and is in fact approximate for small $p$. For these parameters, it is already
a good heuristic for $p=0.1$.

\begin{figure}[ht]
\centering
\includegraphics[width=0.5\textwidth]{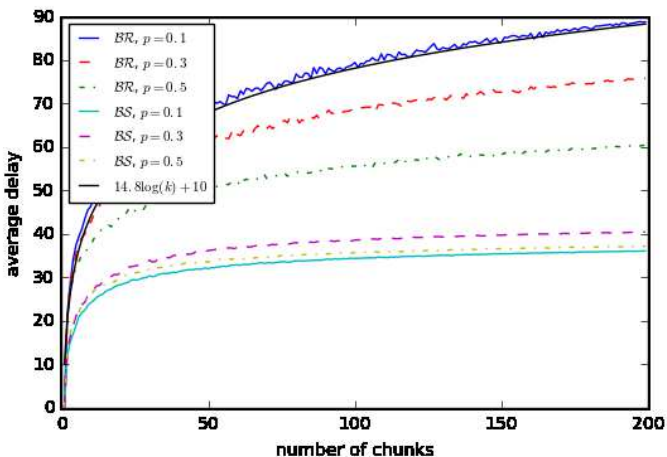}
\caption{Mean delay as a function of the number of chunks.}
\label{fig:filesize}
\end{figure}

We observe that $\mathcal{BS}$ (or equivalently $\mathcal{WF}$)
performs significantly better than $\mathcal{BR}$. Intuitively, this happens since 
the workload across servers is more balanced under $\mathcal{BS}$ and $\mathcal{WF}$. 
 In particular,  while 
they seem to increase as $\log \log k$ for $\mathcal{BR}$ and $\mathcal{WF}$. 
One may see this in the light of the well-known result on balanced allocations under balls and 
bins setting \cite{ABK00} where load-balancing is shown to achieve exponential improvement in
load at the most-loaded bin. However, our setting is markedly different. Not only do we 
incorporate queuing dynamics (\ie, arrivals and services), but also batch arrivals. 
We are interested in studying the delay of a typical job which depends on the workload at a 
randomly chosen subset of servers, instead of the most-loaded server.

\begin{figure}[ht]
\centering
\includegraphics[width=0.5\textwidth]{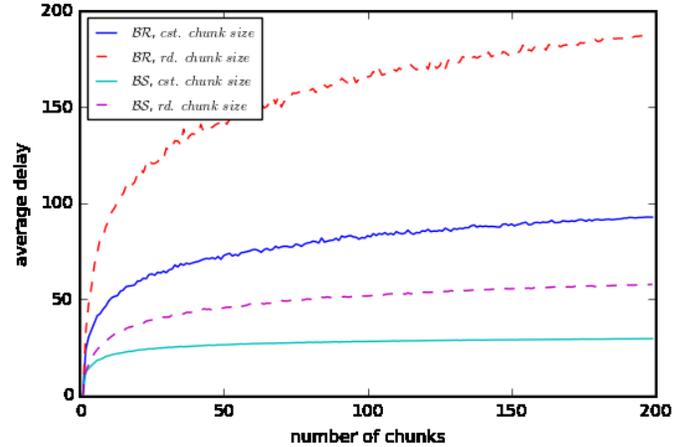}
\caption{Mean delay as a function of the number of chunks: Comparison of scenarios with constant chunk size and random chunk size.}
\label{fig:filesize_randomchunk}
\end{figure}

Interestingly, in our setting, the increase in delays seems logarithmic in $k$ 
even for $\mathcal{BR}$ and $\mathcal{WF}$ policies
under a scenario where the chunk size is assumed to be random, as 
exhibited in Figure~\ref{fig:filesize_randomchunk}. The setting is the following: there are $m=200$ servers;
the distribution $\pi$ is Geometric with rate $0.25$, the size of chunks are
exponentially distributed with rate $0.1$. 
 The load per server is $0.7$. 
The coding assumptions are that $\alpha_k=k+2$. The plots show that, 
in the cases where the chunks sizes are exponentially distributed, the $\log$
growth in delays as exhibited by the upper bound of Corollary~\ref{cor:ViaMD2} is tight when the per-server load is sufficiently large.

Under the assumptions studied above, for each policy,
the growth of delays is logarithmic or sub-logarithmic in the file size.
This type of growth does not generalize to all cases.
For instance, it is shown in Subsection \ref{sec:BLR} below
that it can actually be linear.


\subsection{Impact of Coding Rate}
In order to evaluate the impact of coding rate, we consider a system
under $\mathcal{BS}$ with $m$ servers, where $m$ varies.
We take $\lambda = 0.1$, $p=0.5$ and again $\pi$ is
{\it{Binomial}}$(p,m)$, so that the load per server is constant.
We take $c=14$ and $\mu=1$.
Figure \ref{fig:delay2} gives the mean delay as a function of $m$
for different choices of $\alpha_k - k$.

As expected, $\mathcal{WF}$ and $\mathcal{BS}$ perform significantly
better than $\mathcal{BR}$ when $\alpha_k > k$. We observe that
the delays increase logarithmically with $m$. This may be reasoned as 
follows: In the presence of small and medium sized files if $\alpha_k - k$ 
is a constant then the choice in load-balancing is 
limited and the unevenness in workload distribution across servers 
increases with $m$. Further, 
as we increase the code redundancy $\alpha_k - k$, 
we observe that the mean delays decrease as $\frac{1}{\log(\alpha_k - k)}$. 
This shows that the impact of increasing choice in load-balancing by 
improving coding rate is limited.

\begin{figure}[ht]
\centering
\includegraphics[width=0.5\textwidth]{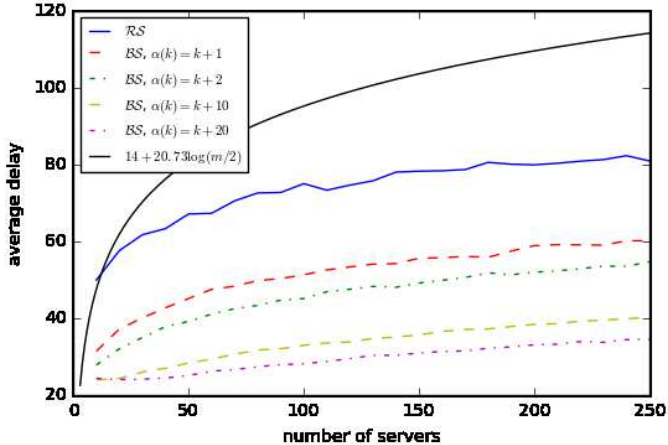}
\caption{Mean delay under $\mathcal{BS}$ as a function of 
number of servers, for different coding rates.}
\label{fig:delay2}
\end{figure}

\subsection{Impact of the (Deterministic) Chunk Size}
\label{sec:ICS}
We now consider the impact of increasing chunk size on delays for  the
case with $\pi$ being {\it{Binomial}}$(p,m)$.
Rather than taking chunks of size $c$, we take chunks
of size $c/a$ with $a$ an integer larger than 1, and study 
mean delay as a function of $a$. 
Here, a file which had $k$ chunks now has $ak$ chunks.
Consider the upper bound of 
Corollary \ref{cor:ViaMD2} (this bound is generic in that 
it holds for all considered delivery policies).
The bound in the new chunk definition is now 
$ \frac 1 {|s^*(a)|} \ln(1+k) (1+o(1))$,
when $k$ tends to infinity, with $s^*(a)$ the only negative
solution of the equation
$ s= \lambda p a\left(1-\exp\left(-\frac {sc}{a}\right)\right).$
When $a$ is large (but such that $pa<1$), this root can be approximated as
$ |s^*(a)|= \frac {(1-\lambda pc)2a} {c^2\lambda p}.$
so that we have the generic bound on requests of initial cardinality $k$:
$$ E[D^k]\le \frac{c^2\lambda p}{(1-\lambda pc)2a} \ln(ak)(1+o(k)),$$
when $k$ tends to infinity.
This shows that within the above
Binomial setting, the mean delay of any policy
can be decreased in such a way that the constant multiplying
the logarithmic term is divided by $a$ (provided $p a<1$). 
%

%
%

\subsection{Beyond the Logarithmic Regime}
\label{sec:BLR}
The last three subsections were about the case where 
$\pi$ has its support on the integers from 0 to $m$.
In view of the results of these subsections, it makes sense
to call this regime the {\em logarithmic regime}. There are some
caveats with this terminology. This term is justified
within the ${\it{Binomial}}(p,m)$ setting, if $p$ is sufficiently
separated from 1. As we saw above,
for $p$ constant and less than 1, the logarithmic regime prevails
even when $m$ tends to infinity. Note that this goes way beyond the regimes
considered in the mean field approach.
However, it should be clear that for fixed $m$ and 
for $p$ close to 1,
the mean delay must be approximately a constant in $k$.

Note that when the support of $\pi$ is 
not limited to the integers less than $m$ with $m$ fixed,
it should be clear that for all delivery policies, when $k$ tends
to infinity, requests of cardinality $k$ have a mean delay of
order $Ck$ with $C$ a constant. This is the linear regime alluded to above.

\section{Conclusions}\label{sec:Conclusions}
One of the main motivations of this work was to derive
scaling laws for job delays in data clusters.
A primary difficulty in the analysis of job delays
in multi-server systems comes from the stochastic coupling
of the server dynamics.
To simplify the analysis, research often resorts to 
an asymptotic `mean-field' approximation which assumes 
an infinite number of servers and a 
static empirical distribution. 
This approximation allows for the decoupling of the dynamics at 
the servers attending a tagged job.
However, such a decoupling does not hold when the total number 
of servers $m$ is finite, or when certain jobs are attended by $O(m)$  
servers. 
In the present paper, we developed a new machinery which utilizes
the notion of association of random variables to obtain
explicit bounds on delays for {\em finite systems}. We 
obtain these bounds via an `independent version' of a coupled system but
without requiring the decoupling of the servers. Further, we clarified the
sense (increasing convex ordering) in
which adaptive policies outperform workload oblivious policies. 
Our simulation results suggest that several quite different delay growths
can be obtained in function of file size,
from strictly sub-logarithmic to logarithmic to linear.
While some specific examples of these
behaviors are well explained by our machinery,
there is still a need in the future for a full classification
allowing one to predict which assumptions lead to each type of growth.

Our machinery is robust to statistical assumptions and to
model specifics. In addition,
various types of file updates/writes operations
can be incorporated in the basic model while preserving the
basic association and stochastic comparison properties. 
In the future, this model should hence also provide a first
comprehensive setting for analyzing the impact of updates 
on job delays in data clusters.

%
%

\section*{Acknowledgements}
We would like to thank Alex Dimakis at The University of Texas at Austin, and Ankit Singh Rawat at Massachusetts Institute of Technology for helpful discussions.

\bibliographystyle{ieeetr}



\section{Appendix}

\subsection{Proof of Lemma~\ref{lemma:MajImpliesIcx}}
Recall that if $X\prec_s^{st} Y$ then $E[\phi(X)] \le E[\phi(Y)]$ for any increasing Schur-convex function $\phi$. Now consider an increasing convex function $g:\mathbb R^m \rightarrow \mathbb R$. Let $P$ be the set of all permutations of $(1,2,\ldots,m)$. One can check that for any $p\in P$, the function $g(p(x))$ is increasing and convex in $x$. Let function $\phi$ be given as follows: 
$$\phi(x) = \frac{1}{m!} \sum_{p \in P} g(p(x)).$$
Then, $\phi$ is a symmetric, increasing, and convex function; hence an increasing Schur-convex function \cite{MOA11}. 
Further, by exchangeability of $X$, we have $E[g(X)] = E[g(p(X))]$ for any $p \in P$, which in turn implies $E[g(X)] = E[\phi(X)] $. Similarly, by exchangeability of $Y$, we have $E[g(Y)] = E[\phi(Y)]$. But as noted above, we have $E[\phi(X)] \le E[\phi(Y)]$. 
%
%
The result thus follows since $g$ is chosen arbitrarily.

\subsection{Proof of Theorem~\ref{thm:Egalitarianism}}

We show $\prec^{st}$ comparisons below. The $icx$ comparisons then follow from Lemma~\ref{lemma:MajImpliesIcx} and noting that $W+s^\mathcal{WS}$, $W+s^\mathcal{BS}$, and $W+s^\mathcal{BR}$ are exchangeable since each of these policies is exchangeable. 

We will need below the following lemma, which says that a vector becomes more balanced if we decrease a larger entry by a `small' amount and increase a smaller entry by the same amount.

\begin{lemma}
  \label{lemma:transform}
  Let $x\in\Reals^m$ such that $x^i\leq x^j$ and $0\le\delta \leq x^j-x^i$. Then $x+\delta e_i-\delta e_j \prec  x$. 
\end{lemma}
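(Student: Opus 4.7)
The plan is to verify directly the two conditions in the definition of majorization. Writing $y = x + \delta e_i - \delta e_j$, sum-preservation $\sum_l y^l = \sum_l x^l$ is immediate since the perturbation has zero total. The work lies in the partial-sum inequalities $\sum_{l=1}^L y^{(l)} \ge \sum_{l=1}^L x^{(l)}$ for $L = 1, \ldots, m-1$.

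I would work with the dual characterization of ordered partial sums,
$$ S_L(z) \;\triangleq\; \sum_{l=1}^L z^{(l)} \;=\; \min_{I \subseteq \{1,\ldots,m\},\, |I|=L}\, \sum_{l \in I} z^l, $$
so that the goal becomes $S_L(y) \ge S_L(x)$ for every $L$. Fix $L$; I would pick an arbitrary $I$ with $|I| = L$ and bound $\sum_{l \in I} y^l$ from below by $S_L(x)$. Taking the minimum over $I$ then delivers the inequality.

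The argument splits into four cases according to the intersection $I \cap \{i,j\}$. If $\{i,j\} \subseteq I$ or $\{i,j\} \cap I = \emptyset$, then $\sum_{l \in I} y^l = \sum_{l \in I} x^l \ge S_L(x)$ by the minimality of $S_L(x)$. If $i \in I$ and $j \notin I$, then $\sum_{l \in I} y^l = \sum_{l \in I} x^l + \delta \ge S_L(x)$ since $\delta \ge 0$. The one delicate case is $i \notin I$, $j \in I$, where $\sum_{l \in I} y^l = \sum_{l \in I} x^l - \delta$. Here I would compare against $I' = (I \setminus \{j\}) \cup \{i\}$, which still has cardinality $L$, and write
$$ \sum_{l \in I} x^l \;=\; \sum_{l \in I'} x^l + (x^j - x^i) \;\ge\; S_L(x) + (x^j - x^i) \;\ge\; S_L(x) + \delta, $$
where the last step invokes the hypothesis $\delta \le x^j - x^i$. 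Thus $\sum_{l \in I} y^l \ge S_L(x)$ in this case too.

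The main (and only) obstacle is precisely this last case, and it is exactly where the hypothesis $\delta \le x^j - x^i$ is needed; the other three cases are automatic. Combining the four cases gives $S_L(y) \ge S_L(x)$ for all $L \in \{1, \ldots, m-1\}$, and together with equality of total sums this is, by definition, $y \prec x$.
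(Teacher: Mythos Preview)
Your proof is correct. The variational identity $S_L(z)=\min_{|I|=L}\sum_{l\in I}z^l$ reduces the problem to a four-case check on whether $i$ and $j$ lie in the candidate index set, and the swap $I\mapsto I'=(I\setminus\{j\})\cup\{i\}$ in the one nontrivial case uses the hypothesis $\delta\le x^j-x^i$ exactly where it is needed.

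The paper takes a different route: it fixes positions $k<l$ with $x^i=x^{(k)}$, $x^j=x^{(l)}$, locates the new positions $k',l'$ of $y^i,y^j$ in the sorted vector $y$, and then verifies $\sum_{u\le i'}y^{(u)}\ge\sum_{u\le i'}x^{(u)}$ by a case split on where $i'$ falls relative to $k,\min(k',l'),\max(k',l'),l$. This is a direct order-statistics computation and involves some index bookkeeping (tracking how the sorted entries of $y$ are expressed in terms of those of $x$). Your argument sidesteps all of that bookkeeping by never naming the sorted positions at all; the subset-minimum characterization absorbs the sorting. The paper's approach, on the other hand, is slightly more explicit about how the order statistics of $y$ relate to those of $x$, which can be informative in other contexts. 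For the present lemma your argument is cleaner and shorter.
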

\begin{proof}
  Set $y=x+\delta e_i-\delta e_j$. There exist $k$ and $l$ such that
  $x_i = x^{(k)}$ and $x_j = x^{(l)}$, with $k<l$, $k'$ and $l'$ such
  that $y_i = y^{(k')}$ and $y_j = y^{(l')}$, and as $\delta\leq
  x^j-x^i$, $k\leq k',l'\leq l$. 

  For all $i'<k$ and $i'>l$, we have $\sum_{u\leq i'} x^{(u)} =\sum_{u\leq i'} y^{(u)}
  $: in the first case, exactly the same terms are involved, and in
  the second, $x^i+x^j = y^i+y^j$, and these terms are all involved.

  If $k\leq i'< \min(k',l')$, then $\sum_{u\leq i'}y^{(u)} = \sum_{u < k}x^{(u)} +\sum_{k\leq u \leq i'}x^{(u+1)} \geq  \sum_{u < k}x^{(u)} +\sum_{k\leq u \leq i'}x^{(u)}$. 

  If $\min(k',l')\leq i' < \max(k',l')$, $\sum_{u\leq i'}y^{(u)} =
  \sum_{u < k}x^{(u)}+\sum_{k\leq u <\min(k',l')}x^{(u+1)} +
  \min(y^i,y^j) + \sum_{\min(k',l')< u \leq i' }x^{(u)} \leq \sum_{u
    \leq i'}x^{(u)}$, as $\min(y^i,y^j) = \min( x^i+\delta,x^j-\delta)
  \geq x^i$ (because $\delta \le x^j-x^i$).
  
  If $\max(k',l') \leq i' \leq l$, $\sum_{u\leq i'}y^{(u)} =
  \sum_{u\leq i'}x^{(u)} + x^{(l)} - x^{(i')} \geq \sum_{u\leq
    i'}x^{(u)}$ (we have used that $x^{(l)} = x^j \geq x^{(i')}$).
  
  Then $y\prec x$. 
\end{proof}

{\em Optimality of $\mathcal{WS}$:}
We now show that $\mathcal{WS}$ achieves more balanced workload than any other policy.
For ease of notation, let $s$, $\kappa$ and $a$ represent the vectors associated with $\mathcal{WS}$ with their usual meaning, and let $s'$, $\kappa'$ and $a'$ represent those associated with any other policy. Recall that the number of chunks for the requested file $\kappa$ and the placement vector $a$ associated with an arrival have same distribution in each system and they are independent of the workload seen by the arrival. Thus, it is sufficient to prove that $W+ cs \prec W + cs'$ w.p.\ $1$ subject to the coupling $\kappa = \kappa'$ and $a=a'$.

  We proceed as follows.  For any routing vector $s''$, define
  its distance to $s$ as $d(s,s'') = \sum_{i~|~s^i>s''^i}s^i-s''^i$. As
  $s$ and $s''$ are integer-valued, $d(s,s'')$ is a non-negative
  integer.  
  
Under the coupling $\kappa = \kappa'$ and $a = a'$,
  we show that for the routing vector $s'\neq s$, there exists
  another routing vector $s''$ such that $d(s,s'')<d(s,s')$ and
  $W+cs''\prec W+cs'$. This means that for any routing vector $s'$, we
  can construct a sequence $(s_0,\ldots,s_d)$ such that $W+cs_0 \prec
  W+cs_{1} \prec \cdots \prec W+cs_d = W+cs'$, with $d(s_0,s) = 0$,
  that is, $s=s_0$. In conclusion, for all routing vector $s'$, $W+cs
  \prec W+cs'$, hence the optimality of water-filling.

  Let us now prove the existence of the routing vector $s''$.
  Note that any routing vector less than
  $\max(s,s')$ is admissible, \ie, $\max(s,s') \le a$. As
  $s' \neq s$, there exists $i$ and $j$ such that $(W+cs)^i <
  (W+cs')^i$ and $(W+cs)^j > (W+cs')^j$, and as $s$ and $s'$ are
  integer-valued, $s^i\leq s'^i-1$ and $s^j\geq s'^j+1$.

  Consider the step of the water-filling algorithm where a chunk is
  sent to server $j$ for the last time, and let $\tilde{s}$ be the routing
  vector obtained just before the chunk is sent to server $j$ step: in
  particular $\tilde{s}^j = s^j-1$, so 
 \begin{equation}\label{eq:intermediate}
 (W+cs')^j\leq(W+cs)^j -c = (W+c\tilde{s})^j.
  \end{equation} 
  
  Due to the water-filling algorithm, server $j$ 
  is chosen over $i$ because $(W+c\tilde{s})^j \leq (W+c\tilde{s})^i$. But we also have $ (W+c\tilde{s})^i \leq (W+cs)^i\leq (W+cs')^i-c$. Thus, we get $ (W+c\tilde{s})^j \leq (W+cs')^i-c$. Combining this with \eqn{eq:intermediate}, we get $(W+cs')^j \le (W+cs')^i-c$. 

  Now, consider the new vector where $s'' = s' + e_j-e_i$. We
  have $d(s,s'') < d(s,s')$ and from Lemma~\ref{lemma:transform},
we have $W+ s''\prec W + s'$, as required.

{\em Comparing $\mathcal{BS}$ with $\mathcal{BR}$:} We now show that $\mathcal{BS}$ achieves more balanced workload than $\mathcal{BR}$. For ease of notation, let $s$, $\kappa$ and $a$ represent the vectors associated with $\mathcal{BS}$ with their usual meaning, and let $s'$, $\kappa'$ and $a'$ represent those associated with $\mathcal{BR}$. We again assume the coupling $\kappa = \kappa'$ and $a = a'$.

Under the coupling, we will show a statement which is somewhat stronger than required; in particular, we will show that the batch-sampling is optimal
  among all {\em balanced routing vectors}, \ie\ the routing
  vectors $s'$ such that $s'^i\in\{l,l+1\}$ where $l = \floor{\frac{\kappa'}{m}}$, and $|\{i:s'^i = l+1\}| = \kappa' - m\floor{\frac{\kappa'}{m}}$. 
  Moreover, due to our coupling $\kappa = \kappa'$ and $a = a'$, we only need to
  focus on routing vectors $s,s'$ of the form $\{0,1\}^m$.

We now proceed as follows: Take any balanced routing vector $s''$ and
define
  its distance to $s$, the routing vector obtained with the
  water-filling policy as $d(s,s'') = \sum_{i~|~s^i>s''^i}s^i-s''^i$. As
  $s$ and $s'$ are integer-valued, $d(s,s')$ is a non-negative
  integer.  We show that for any routing vector $s'\neq s$, there
  exists another balanced routing vector $s''\in\{0,1\}^m$ such that
  $d(s,s'')<d(s,s')$ and $W+cs''\prec W+cs'$. This means that for any
  routing vector $s'$, we can construct a sequence $(s_0,\ldots,s_d)$
  such that $W+cs_0 \prec W+cs_{1} \prec \cdots \prec W+cs_d = W+cs'$,
  with $d(s_0,s) = 0$, that is, $s=s_0$. In conclusion, for all
  routing vector $s'$, $W+cs \prec W+cs'$, hence the optimality of
  batch-sampling among the balanced routing vectors.

  Let us now prove the existence of the routing vector $s''$.  Note
  that any routing vector less than $\max(s,s')$ is admissible (there
  are enough chunks available). As $s' \neq s$, there exists $i$ and
  $j$ such that $s^i =0$, $ s'^i=1$, $s^j=1$ and $s'^j=0$, and as $s$
  is obtained from the batch-sampling, one can always such an $i$ and
  $j$ such that that $W^j \leq W^i$, so $(W+cs')^j \leq (W+cs')^i$ and
  $(W+cs')^j = W^j \leq W^i = X^i+cs'^i - c = (W+cs')^i - c$.

  Consider the routing vector $s'' = s' + e_j-e_i$.  From
  Lemma~\ref{lemma:transform}, the above inequality implies that
  $W+ s''\prec W+ s'$, and $d(s,s'') < d(s,s')$, as required.

\subsection{Proof of Lemma~\ref{lemma:MonotonicityOfBR}}

Since $W \le^{icx} W'$, Strassen's theorem \cite{MuS02} says that there exists a coupling such that $E[W'|W] \ge W$. In addition, since $s$ and $s'$ are identical in distribution and independent of $W$ and $W'$, there exists a coupling (namely one with $s = s'$) such that
\begin{equation}\label{eq:Mod_Strassen}
E\left[W' + cs' | W,s\right] \ge W + cs.
\end{equation}

%

Consider an increasing convex function $g$. Under the above coupling, using Jensen's inequality we get 
$$E[g(W'+cs')|W,s] \ge g\left(E[W'+cs'|W,s]\right).$$
Combining this with \eqn{eq:Mod_Strassen}, we get 
$$E[g(W'+cs')|W,s] \ge g(W+cs).$$
By taking expectation on both sides, we get $E[g(W'+cs')] \ge E[g(W+cs)]$. Hence the result holds. 

\subsection{Proof of Theorem \ref{thm:DelaysViaAssociation}}
We first prove part $(i)$ using induction. Clearly, $W_0$ is associated since all its entries are constant and equal to zero. Suppose $W_n$ is associated for some $n$. We show below that this implies that $W_{n+1}$ is associated as well. 

Recall that under $\mathcal{BR}$ the random vectors $W_n$, $s_n$, and $-\mu\tau_n{\bf 1}$ are mutually independent and are themselves associated. Hence, from part $(ii)$ of Proposition~\ref{prop:AssociationProperties}, we have that the entries of $W_n$, $s_n$, and $-\mu\tau_n{\bf 1}$ are mutually associated. Each entry of $(W_n + c s_n - \mu \tau_{n}  {\bf 1})^+$ is an increasing function of the entries of $W_n$, $s_n$, and $-\mu\tau_n{\bf 1}$. From $m$ applications of part $(i)$ of Proposition~\ref{prop:AssociationProperties},  and then of its part $(iii)$,  we get that $W_{n+1}$ is associated. 

We now prove part $(ii)$ of the theorem. We show that in fact for $\mathcal{BR}$ policy the stochastic dominance is in $\le^{st}$ sense which is stranger than $\le^{icx}$. By definition, the vector $\tilde W_n = (\tilde W_n^i: i=1,\ldots,m)$ is an independent version of $W_n$ for each $n$ under $\mathcal{BR}$ policy. Since $\tilde{s}$ is an independent exchangeable vector, and since both $W_n$ and $\tilde W_n$ are exchangeable, it is sufficient to assume that $\tilde{s}$ is deterministic. Then, the result follows for $\mathcal{BR}$ from Proposition \ref{prop:ComparingMax}. 

For $\mathcal{WF}$ and $\mathcal{BS}$, the result then follows by arguing along the lines of Theorem~\ref{thm:DelayComparision}  while additionally conditioning on $|s_n| = k$.

\subsection{Proof of Corollary~\ref{cor:ViaMD2}}
The proof leverages the following two results: 
\begin{theorem} [Theorem 7.4. in \cite{BMS89}]
Let $\{ Y_l \}_1 ^\infty $ be a family
of i.i.d. $\mathbb{R} _+ $-valued random variables whose common
distribution function $G(\cdot)$ exhibits the tail behavior
$$ P[Y_1 >x]= 1- G(x) = C e^{-qx}( 1 + o(1) ), \quad x \geq 0,$$
for some $q >0$ and $C>0$. Then
$$E \biggl [ \max \bigl \{ Y_1,...,Y_k \bigr \} \biggr ] =
\frac 1 q \log(k)(1+o(1))
$$
when $k$ goes to infinity.
\end{theorem}

\begin{lemma}
The steady state $Y$ delay in the M/D/1 queue with arrival rate $\lambda$
and service time $\sigma$, with $\lambda\sigma<1$ has the tail behavior 
$$P[Y >x]= 1- G(x) = C e^{-qx}( 1 + o(1) ), \quad x \geq 0,$$
with $q$ defined as in Equation (\ref{eqlamb}).
\end{lemma}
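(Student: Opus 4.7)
The plan is to read off the exponential tail of $Y$ from the rightmost non-zero singularity of its Laplace transform, following the standard complex-analytic inversion program. Specializing the Pollaczek--Khinchine formula of Lemma~\ref{lemma:PollaczekKhinchine} to deterministic service times (so that $\psi_\sigma(s) = e^{-s\sigma}$) gives
$$\mathcal{G}(s) := E[e^{-sY}] = \frac{(1-\lambda\sigma)\,s}{s - \lambda(1 - e^{-s\sigma})},$$
and the analysis reduces to locating the non-zero zeros of the denominator $h(s) := s - \lambda(1 - e^{-s\sigma})$.

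First I would identify the unique negative real root. Setting $s = -r$ with $r > 0$, the equation $h(s) = 0$ becomes $\lambda e^{r\sigma} = \lambda + r$; the function $f(r) = \lambda e^{r\sigma} - \lambda - r$ vanishes at $0$, has $f'(0) = \lambda\sigma - 1 < 0$ by the stability hypothesis, is strictly convex, and diverges to $+\infty$, hence has exactly one positive root, which I call $q$. To match the closed form (\ref{eqlamb}), introduce the change of variable $y = (\lambda - s)\sigma$; a short manipulation turns $h(s)=0$ into $(-y)e^{-y} = -\lambda\sigma e^{-\lambda\sigma}$. Inverting via the Lambert $W$ function on the appropriate real branch (the principal branch recovers the trivial root $s=0$; the other real branch gives $s = -q$) yields $s = \lambda + W(-\lambda\sigma e^{-\lambda\sigma})/\sigma$, and taking absolute values reproduces $q$ exactly as in (\ref{eqlamb}).

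Next I would verify that $-q$ is the rightmost non-zero zero of $h$ in the complex plane. For a root $s = a + ib$ with $b \ne 0$, taking real parts of $s = \lambda(1 - e^{-s\sigma})$ yields $\lambda - a = \lambda e^{-a\sigma}\cos(b\sigma)$, and $|\cos| \le 1$ forces $\lambda e^{-a\sigma} \ge \lambda - a$, i.e.\ $f(-a) \ge 0$. The convexity analysis above then gives $a \le -q$, with equality excluded when $b \ne 0$ by strict inequality in the cosine bound. Simplicity of the pole at $-q$ follows from $h'(-q) = 1 - \lambda\sigma e^{q\sigma} \ne 0$, which in turn comes from strict convexity of $f$ at its second root.

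The last step is a contour-shift inversion. A short calculation gives the residue of $\mathcal{G}$ at $-q$ as $A = q(1-\lambda\sigma)/(\lambda\sigma e^{q\sigma} - 1) > 0$. The Laplace transform of the survival function $\bar G(x) = P[Y > x]$ equals $(1 - \mathcal{G}(s))/s$, inheriting a simple pole at $-q$ with residue $C := A/q > 0$ and being otherwise meromorphic in some strip $\Re s > -q - \eta$. Shifting the Bromwich contour past this single pole yields $\bar G(x) = C e^{-qx}(1 + o(1))$, which is exactly the claim. The main obstacle I anticipate is the contour-shift argument itself: one must control the distribution of complex zeros of $h$ in the strip $-q - \eta < \Re s < -q$ (ruling out accumulation near $\Re s = -q$) and establish enough decay of $\mathcal{G}$ along vertical lines to justify pushing the contour. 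Both are standard for exponential polynomials of M/D/1 type, but require a careful enumeration of the complex roots of $h$ via Rouché-type comparisons with the zeros of $\lambda e^{-s\sigma}$.
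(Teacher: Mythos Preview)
Your proposal is correct and follows essentially the same route as the paper: specialize Pollaczek--Khinchine to get $\mathcal{G}(s)$, locate the non-zero real root of $s=\lambda(1-e^{-s\sigma})$, identify it with the Lambert-$W$ expression in~(\ref{eqlamb}), and then invoke complex-analytic inversion to read off the exponential tail. The paper's own proof is only a two-sentence sketch (``elementary calculations'' for the root identification, ``classical complex analysis arguments'' for the tail), so your treatment of the convexity argument for uniqueness, the branch discussion for Lambert~$W$, the verification that $-q$ is the rightmost singularity, and the residue/contour-shift step simply fills in what the paper leaves implicit.
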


\begin{proof}
The Pollaczek-Khinchine formula, of Lemma \ref{lemma:PollaczekKhinchine},
when applied to the M/D/1 queue, gives 
a steady state delay with a Laplace transform having an
isolated pole at the only solution other than 0 of the equation
$$s=\lambda(1-\exp(-s\sigma)).$$
Elementary calculations show that this solution is precisely
$q$ given in (\ref{eqlamb}).
The shape of the tail then follows from classical complex
analysis arguments.
\end{proof}

The fact that the delay of a request of size $k$ 
is upper bounded by $c$ plus the maximum of the
workloads in $k$ independent M/D/1 queues
with arrival rate $\lambda p$ and service times $c$
immediately leads to the announced result.

\end{document}